\let\cc\@undefined
\let\fact\@undefined
\let\endfact\@undefined
\DeclareSymbolFont{Shuffle}{U}{shuffle}{m}{n}
\DeclareFontFamily{U}{shuffle}{}
\DeclareFontShape{U}{shuffle}{m}{n}{%
  <-8>shuffle7%
  <8->shuffle10%
}{}
\DeclareMathSymbol\shuffle{\mathbin}{Shuffle}{"001}
\DeclareMathSymbol\cshuffle{\mathbin}{Shuffle}{"002}
\theoremstyle{plain}
\newtheorem{theorem}{Theorem}[section]
\newtheorem{lemma}[theorem]{Lemma}
\newtheorem{proposition}[theorem]{Proposition}
\newtheorem{corollary}[theorem]{Corollary}
\newtheorem{fact}[theorem]{Fact}
\theoremstyle{definition}
\newtheorem{definition}[theorem]{Definition}
\theoremstyle{remark}
\newtheorem{remark}[theorem]{Remark}
\newtheorem{claim}[theorem]{Claim}
\newcommand{\subA}{{\subword_{A^*}}}%parenthesis to make spacing like ordinary symbol
\newcommand{\nsubA}{{\not\subword_{A^*}}}%parenthesis to make spacing like ordinary symbol
\newcommand{\ssubA}{{\ssubword_{A^*}}}%parenthesis to make spacing like ordinary symbol
\newcommand{\nssubA}{{\not\ssubword_{A^*}}}%parenthesis to make spacing like ordinary symbol
\newcommand{\supA}{{\supword_{A^*}}}%parenthesis to make spacing like ordinary symbol
\newcommand{\nsupA}{{\not\supword_{A^*}}}%parenthesis to make spacing like ordinary symbol
\newcommand{\ssupA}{{\ssupword_{A^*}}}%parenthesis to make spacing like ordinary symbol
\newcommand{\perpA}{{\perp_{A^*}}}%parenthesis to make spacing like ordinary symbol
\newcommand{\idA}{{=_{A^*}}}%parenthesis to make spacing like ordinary symbol
\newcommand{\etal}{\textit{et al}. }
\newcommand{\stc}{\textit{sc}}
\newcommand{\equivdef}{\stackrel{\mbox{\begin{scriptsize}def\end{scriptsize}}}{\iff}}
\newcommand{\egdef}{\stackrel{\mbox{\begin{scriptsize}def\end{scriptsize}}}{=}}
\newcommand{\subword}{\sqsubseteq}
\newcommand{\supword}{\sqsupseteq}
\newcommand{\ssubword}{\sqsubset}
\newcommand{\ssupword}{\sqsupset}
\newcommand{\strictsubword}{\ssubword}%synonym
\newcommand{\step}[1]{\xrightarrow{\!\!#1\!\!}}
\newcommand{\calA}{\mathcal A}
\newcommand{\calB}{\mathcal B}
\newcommand{\calJ}{\mathcal J}
\newcommand{\Nat}{\mathbb{N}}
\newcommand{\Reals}{\mathbb{R}}
\newcommand{\up}{{\uparrow}}
\newcommand{\down}{{\downarrow}}
\newcommand{\simonRstar}{\lesssim_n}
\newcommand{\obracew}[2]{{\overset{#2}{\overbrace{#1}}}}
\newcommand{\DD}[1]{\down{#1}}
\newcommand{\LL}[1]{\up{#1}}
\newcommand{\LLA}[1]{\up_A{#1}}
\newcommand{\pair}[2]{\begin{bmatrix} #2 \\ #1 \end{bmatrix}}
\def\FO{\ComplexityFont{FO}}
\def\Rat{\ComplexityFont{Rat}}
\def\Rec{\ComplexityFont{Rec}}
\def\Reg{\ComplexityFont{Reg}}
\def\EXPTIME{\ComplexityFont{EXPTIME}}
\def\EXPTIMEiii{\ComplexityFont{3-}\EXPTIME}
\renewcommand{\setminus}{\smallsetminus} %looks better
\renewcommand{\epsilon}{\varepsilon}     %looks better
\begin{document}
\title[The height of piecewise-testable languages]{The height of piecewise-testable languages and the complexity of the logic of subwords}
\author[P. Karandikar]{Prateek Karandikar}
\address{CMI, Chennai, India}
\email{prateek@cmi.ac.in}
\author[Ph.~Schnoebelen]{Philippe Schnoebelen}
\address{LSV, CNRS \& ENS Paris-Saclay, France}
\email{phs@lsv.fr}
\thanks{Work partially supported by ANR grant ANR-14-CE28-0002 PACS. The first
author is now with Google, London.}
\keywords{Logic of subsequences; Piecewise-testable languages; Descriptive complexity}
\subjclass{
% ?? D.2.4 Software/Program Verification,
F.4.1 Mathematical Logic;
F.4.3 Formal Languages;
F.3.1 Specifying and Verifying and Reasoning about Programs
}
\titlecomment{An extended abstract of this work first appeared in the
  proceedings of the 25th EACSL Conference on Computer Science Logic
   (CSL 2016)~\cite{KS-csl2016}.
}

\begin{abstract}
The height of a piecewise-testable language $L$ is the maximum length
of the words needed to define $L$ by excluding and requiring given
subwords.  The height of $L$ is an important descriptive complexity
measure that has not yet been investigated in a systematic way.
This article develops a series of new techniques for bounding the
height of finite languages and of languages obtained by taking
closures by subwords, superwords and related operations.

As an application of these results, we show that
$\FO^2(A^*,\subword)$, the two-variable fragment of the first-order
logic of sequences with the subword ordering, can only express
piecewise-testable properties and has elementary complexity.
\end{abstract}
%% Local Variables:
%% ispell-check-comments: nil
%% ispell-local-dictionary: "british"
%% fill-column: 70
%% End:

%%% Local Variables:
%%% mode: latex
%%% TeX-master: "lmcs.tex"
%%% End:

\maketitle

%%% Local Variables:
%%% mode: latex
%%% TeX-master: "lmcs.tex"
%%% End:

\section{Introduction}
%=====================
\label{sec-intro}

For two words $u$ and $v$ and some $n\in\Nat$, we write $u\sim_n v$
when $u$ and $v$ have the same (scattered) subwords\footnote{Or
``subsequences'', not to be confused with ``factors''.} of length at
most $n$.  A language $L\subseteq A^*$ is \emph{piecewise-testable} if
it is closed under $\sim_n$ for some $n\in\Nat$.

Piecewise-testable (PT) languages were introduced more than forty
years ago in Imre Simon's doctoral thesis
(see~\cite{simon72,simon75,sakarovitch83}) and have played an important role
in the algebraic and logical theory of first-order definable
languages, see~\cite{pin86,DGK-ijfcs08,klima2011} and the references
therein.  They also constitute an important class of simple regular
languages with applications in learning theory~\cite{kontorovich2008},
databases~\cite{bojanczyk2012b}, linguistics~\cite{rogers2013},
etc. The concept of PT languages has been extended to
 richer notions of ``subwords''~\cite{zetzsche2018}, 
to
trees~\cite{bojanczyk2012b}, infinite words~\cite{perrin2004,carton2018b},
pictures~\cite{matz98}, 
or any combinatorial
well-quasi-order~\cite{goubault2016}.
\\

When a language $L\subseteq A^*$ is PT, we further say that it is
``$n$-PT'' if it is closed under~$\sim_n$, and the smallest such $n$
is called the PT \emph{height} of $L$, denoted
$h(L)$ in this article.

The height of piecewise-testable languages is a natural measure of
descriptive complexity.  Indeed, $h(L)$ coincides with the number of
variables needed in a $\calB\Sigma_1$ formula that defines
$L$~\cite{DGK-ijfcs08}.  In this article, the main question we address
is ``\emph{how can one bound the height of PT languages obtained by
natural language-theoretic operations?}'' Since the height of these
languages is a more robust measure than, say, their state complexity,
it can be used advantageously in the complexity analysis of problems
where PT languages are prominent.  As a matter of fact, our results
apply to, and were motivated by, open problems in the complexity
analysis of a logic of subwords, see~\Cref{sec-FO-basic}.

\subsubsection*{Related work.}
%-----------------------------
The height of PT languages has been used to measure the difference
between separable languages, see e.g.~\cite{hofman2015}.  Deciding
whether a DFA or a NFA $\calA$ recognises a $n$-PT language is
$\coNP$-complete or $\PSPACE$-complete respectively
(see~\cite{masopust2017} and the references therein).  The methods
underlying these algorithms usually provide a bound on $h(L)$ in
terms of $\calA$: Kl{\'\i}ma and Pol\'ak showed that $h(L)$ is
bounded by the maximal length of a simple path from an initial to a final state in $\calA$~\cite{klima2013}.  The currently best
bounds on $h(L)$ based on automata for $L$ have been obtained by
Masopust and Thomazo~\cite{masopust2015,masopust2016b}.

When $L$ is obtained by operations on other languages, very little is
known about PT heights. It is clear that $h(A^*\setminus
L)=h(L)$ and that $h(L\cup L')\leq\max(h(L),h(L'))$ but
beyond boolean operations, quotients, and inverse morphisms, there are
very few known ways of obtaining PT languages.

\subsubsection*{Our contribution.}
%---------------------------------
We provide upper and lower bounds on the PT height of finite languages and on
PT languages obtained by downward-closure (collecting all subwords of
all words from some $L$), upward-closure, and some related operations
(collecting words in $L$ that are minimal wrt the subword ordering,
etc.)  We also show that the incomparability relation preserves
piecewise-testability and we bound the PT heights of the resulting
languages.  Crucially, we show that these bounds are \emph{polynomial}
when expressed in terms of the PT height of the arguments.  One
important tool is a \emph{small-subword theorem} that shows how any
long word $u$ contains a short subword $u'$ that is
$\sim_n$-equivalent.  Reasoning about subwords involves {ad hoc}
techniques and leveraging the small-subword theorem to analyse
downward-closures or incomparability languages turns out to be
non-trivial. Subsequently, all the above results are used to prove
that $\FO^2(A^*,\subword)$, the two-variable logic of subwords, has
elementary complexity. For this logic, the decidability proof
in~\cite{KS-fosubw} did not come with an elementary complexity upper
bound because the usual measures of complexity for regular languages
can grow exponentially with each boolean combination of upward and
downward closures, and this is what prompted our investigation of PT
heights.

\subsubsection*{Outline of the article.}
%---------------------------------------
\Cref{sec-basics} recalls the basic notions (subwords, Simon's
congruence, etc.)  and gives some first bounds relating PT
heights and minimal automata.  \Cref{sec-PT-levels} focuses on
finite languages and develops our main tool: the small-subword
theorem.  Sections~\ref{sec-PT-up} and~\ref{sec-PT-down} give bounds
for the height of PT languages obtained by upward and downward
closures, while \Cref{sec-PT-IL} considers the incomparability
relation and the resulting PT heights.  Finally, in
\Cref{sec-FO-basic} we apply these results to the complexity of
$\FO^2(A^*,\subword)$. In passing, we   characterise the expressive
power of the $\FO^2(A^*,\subword)$ logic, or equivalently of its
quantifier-free fragment, via  new notions of subword-recognizable and
piecewise-testable relations on words.

%% Local Variables:
%% ispell-check-comments: nil
%% ispell-local-dictionary: "british"
%% fill-column: 70
%% End:

%%  LocalWords:  Kuske nonelementary Thomazo hoc ak wrt Imre

%%% Local Variables:
%%% mode: latex
%%% TeX-master: "lmcs.tex"
%%% End:

\section{Basic notions}
%======================
\label{sec-basics}

We consider finite words $u,v,...$ over a given finite alphabet $A$ of
letters like $a,b,\ldots$.  Concatenation of words is written
multiplicatively, with the empty word $\epsilon$ as unit.  We freely
use regular expressions like $(a b)^*+(b a)^*$ to denote regular
languages.

The length of a word $u$ is written $|u|$ while, for a letter $a\in
A$, $|u|_a$ denotes the number of occurrences of $a$ in $u$.  The set
of letters that occur in $u$ is denoted by $\alpha(u)$.  The set of
all words over $A$ is written $A^*$ and for $\ell\in\Nat$ we use
$A^{=\ell}$ and $A^{\leq \ell}$ to denote the subsets of all words of
length $\ell$ and of length at most $\ell$ respectively.

A word $v$ is a \emph{factor} of $u$ if there exist words $u_1$ and
$u_2$ such that $u = u_1 v u_2$.  If furthermore $u_1=\epsilon$ then
$v$ is a \emph{prefix} of $u$ and we write $v^{-1}u$ to denote the
residual $u_2$. If $u_2=\epsilon$ then $v$ is a \emph{suffix} and $u
\,v^{-1}$ is the residual.

\subsection{Subwords and superwords.}
%------------------------------------
We say that a word $u$ is a \emph{subword} (i.e., a subsequence) of
$v$, or equivalently that $v$ is a \emph{superword} of $u$, written
$u\subword v$, when $u$ is some $a_1\cdots a_n$ and $v$ can be written
as $v_0a_1 v_1 \cdots a_n v_n$ for some $v_0,v_1,\ldots,v_n\in A^*$,
e.g., $\epsilon\subword b b a\subword a b a b a$.

We write $u\ssubword v$ for the associated strict ordering, where
$u\neq v$. Two words $u$ and $v$ are \emph{incomparable} (with respect
to the subword relation), denoted $u \perp v$, if $u \not \subword v$
and $v \not \subword u$.  Factors are a special case of subwords.

With any $u\in A^*$ we associate the upward and downward closures,
$\LL{u}$ and $\DD{u}$, given by
\begin{xalignat*}{2}
\LL{u}&\egdef\{v\in A^* ~|~ u\subword v\}    \:,
&
\DD{u}&\egdef\{v\in A^* ~|~ v\subword u\}    \:.
\end{xalignat*}
(Formally, one should write $\LLA{u}$ since the definition depends on
the alphabet at hand, but we will leave $A$ implicit: it will always
be clear from the context.)  For example, $\DD{a b}=\{a b, a, b,
\epsilon\}$ and $\LL{a b}=A^*aA^*bA^*$.

This is generalised to the closures of whole languages, via $\up
L=\bigcup_{u\in L}\LL{u}$ and $\down L=\bigcup_{u\in L}\down u$.  The
Kuratowski closure axioms are satisfied:
\[
\up \emptyset=\emptyset
\:,
\quad
L\subseteq\up L=\up\up L
\:,
\quad
\up\bigl(\bigcup_i L_i\bigr)=\bigcup_i\up L_i
\:,
\quad
\up \bigl(\bigcap_i \up L_i\bigr)=\bigcap_i\up L_i
\:,
\]
and similarly for downward closures.
We say that a language $L$ is \emph{upward-closed} if $L = \up L$, and
\emph{downward-closed} if $L = \down L$.  Note that a language is
upward-closed if, and only if, its complement is downward-closed.

A variant of the closure operations is based on the strict ordering:
we let
\begin{xalignat*}{4}
\up_< u & \egdef\{v~|~u\ssubword v\}
\:,
&
\up_< L & \egdef\bigcup_{u\in L}\up_< u
\:,
&
\down_< u & \egdef\{v~|~v\ssubword u\}
\:,
&
\down_< L & \egdef\bigcup_{u\in L}\down_< u
\:.
\end{xalignat*}
While these are not closure operations, the languages $\up_<L$ and
$\down_< L$ are upward-closed and downward-closed, respectively.
Since upward-closed and downward-closed languages are regular (Haines
Theorem~\cite{haines69}, also a corollary of Higman's
Lemma~\cite{higman52}) we conclude that $\up L$, $\down L$, $\up_<L$
and $\down_<L$ are regular for any $L$.

Finally we further define
\begin{xalignat*}{1}
  I(L) &\egdef \{ u\in A^* ~|~ \exists v \in L: u\perp v\}      \:.
\end{xalignat*}
Thus $I(L)$ collects all words that are incomparable with \emph{some
word} in $L$. For example, $I(a^*b^*)=A^+$ and $I((aaa)^+) =
A^*\setminus a^*$.

\subsection{Recognizable and rational relations over words.}
%-----------------------------------------------------------
Recall that a binary relation $R\subseteq A^*\times A^*$ is
\emph{rational} if it can be defined via an (asynchronous
nondeterministic) transducer or, equivalently, via a regular
expression using elements of $A^*\times A^*$, unions, concatenations
and Kleene stars, see, e.g.~\cite[Chap.\ 3]{berstel79} or
\cite[Chap.\ 4]{sakarovitch2009}.  It is well-known that, while
$\Rat(A^*\times A^*)$ is (effectively) closed under composition ---as
well as union, concatenation, and Kleene star,--- it is not closed
under complement or intersection.

For example, we can define equality over $A^*$ as well as the subword
relations (strict and non strict) via the following regular
expressions:\footnote{The expression for $\ssubA$ uses the
concatenation, denoted $R\cdot R'$,
of relations, not their composition $R\circ R'$.}
\begin{xalignat}{3}
\label{eq-subword-rational}
\idA               &= \left(\bigcup_{a\in A}\pair{a}{a}\right)^*
\!\!,
&
\subA        &= \left(\bigcup_{a\in A}\pair{\epsilon}{a}\cup\pair{a}{a}\right)^*
\!\!,
&
\ssubA &= \subA\cdot\left(\bigcup_{a\in A}\pair{\epsilon}{a}\right)\cdot\subA
.
\end{xalignat}
Since $\subA$ and $\ssubA$ can even be defined by \emph{deterministic}
transducers, we deduce that their complements, $\nsubA$ and $\nssubA$, are
rational. Finally, let us mention the following result:
\begin{propC}[\cite{KS-fosubw}]
\label{perp-rational}
The incomparability relation $\perpA \subseteq A^*\times A^*$ is
rational.
Consequently
 $I(L)$, i.e.,
the image of $L$ by $\perpA$, is effectively regular for any regular
$L$.
\end{propC}
We note that proving Proposition~\ref{perp-rational} cannot rely
on the  characterisation $\perpA = \nsubA\cap\nsupA$
since the intersection of two rational relations is in general not
rational, even when the two relations are given by deterministic
transducers.
\\

The rational relations over $A^*$ encompass the special case of the
\emph{recognizable relations}. Recall that $R\subseteq A^*\times A^*$
is recognizable ---in the standard way, i.e., ``{by some morphism to a
finite monoid}''--- if it is a finite union $R=L_1\times L'_1\cup
\cdots \cup L_m\times L'_m$ of cartesian products where all $L_i$'s
and $L'_i$'s are regular languages over $A$.  Recognizable relations
are rational but the converse does not hold, for example, the equality
relation $\idA$ is not recognizable. We shall use the well-known and
easy-to-see fact that $\Rec(A^*\times A^*)$ is (effectively) closed
under boolean operations.

\subsection{Simon's congruence.}
%-------------------------------
For $n \in\Nat$ and $u,v \in A^*$, we let
\begin{xalignat}{1}
\label{eq-def-sim-n}
   u \sim_n v &\equivdef \DD{u}\cap A^{\leq n}=\DD{v}\cap A^{\leq n}
\:.
\end{xalignat}
In other words, $u\sim_n v$ if $u$ and $v$ have the same subwords of
length at most $n$.  For example $abab\sim_1 aabb$ (both words use the
same letters) but $abab\not\sim_2 aabb$ ($ba$ is a subword of $abab$,
not of $aabb$).  Note that $u\sim_0 v$ for any $u,v$, and $u\sim_n u$
for any $n$.
We write $[u]_n$ for the equivalence class of $u\in A^*$ under
$\sim_n$.  Note that each $\sim_n$, for $n=1,2,\ldots$, has finite
index~\cite{simon75,sakarovitch83}.

We further let
\begin{xalignat}{1}
\label{eq-def-simstar}
   u \simonRstar v & \equivdef u\sim_n v \land u\subword v
\:.
\end{xalignat}
Note that $\simonRstar$ is stronger than $\sim_n$. Both relations are
(pre)congruences: $u\sim_n v$ and $u'\sim_n v'$ imply $uu'\sim_n vv'$,
while $u\simonRstar v$ and $u'\simonRstar v'$ imply $uu'\simonRstar
vv'$.
The equivalence $\sim_n$, introduced in~\cite{simon72}, is called Simon's congruence of order $n$.

The following properties will be useful:
\begin{lemma}
\label{lem-useful}
For all $u,v,v' \in A^*$ and $a,b \in A$:
\begin{enumerate}

\item
\label{it-convex}
% was (1)
If  $u \simonRstar v$ then $u\sim_n w$ for
all $w\in A^*$ such that $u\subword w\subword v$;

\item
\label{it-carac-richn}
% was (5)
When $n>0$, $u\sim_n uv$ if, and only if, there exists a
factorization $u=u_1 u_2\cdots u_n$ such that $\alpha(u_1)
\supseteq \alpha(u_2) \supseteq \cdots \supseteq \alpha(u_n) \supseteq
\alpha(v)$;

\item
\label{it-diff-let}
% was (6)
If $uav\sim_n ubv'$ and $a\neq b$ then
$ubav\sim_n ubv'$ or $uabv'\sim_n uav$ (or both);

\item
\label{it-upperbound}
% was (2)
If  $u \sim_n v$ then there exists $w \in
A^*$ such that $u \simonRstar w$ and $v \simonRstar w$;

\item
\label{it-shorter}
% was (7)
If $u\sim_n v$ and $|u|<|v|$ then there exists some $v'$ with
$|v'|=|u|$ and such that $u\sim_n v'\subword v$;

\item
\label{it-pumping}
% was (3)
If $u v \sim_n uav$ then
$uv \sim_n u a^\ell v$ for all $\ell\in\Nat$;

\item
\label{it-sing-inf}
% was (4)
Every equivalence class of $\sim_n$ is a singleton or is infinite.

\end{enumerate}
\end{lemma}
\begin{proof}
(\ref{it-convex}) is by combining Eq.~\eqref{eq-def-sim-n} with
$\DD{u}\subseteq \DD{w}\subseteq \DD{v}$;
(\ref{it-carac-richn}--\ref{it-upperbound}) are Lemmas 3, 5, and 6 from~\cite{simon75};
(\ref{it-shorter}) is an immediate consequence of Theorem~4 from
\cite[p.~91]{simon72}, showing that all minimal (wrt.\ $\subword$) words in $[u]_n$ have
the same length ---see also Theorem~6.2.9 from
\cite{sakarovitch83};
(\ref{it-pumping}) is in the proof of Corollary~2.8 from~\cite{sakarovitch83};
(\ref{it-sing-inf}) follows from (\ref{it-convex}), (\ref{it-upperbound}) and (\ref{it-pumping}).
\end{proof}

\subsection{Piecewise-testable languages.}
%-----------------------------------------
A language $L\subseteq A^*$ is piecewise-testable (or PT) if it if
closed under $\sim_n$ for some $n$ (and then we say that it is
$n$-piecewise-testable, or $n$-PT).  Note that if $L$ is $n$-PT, it is
also $m$-PT for any $m>n$. We write $h(L)$ for the smallest $n$
---called the \emph{height} of $L$--- such that $L$ is $n$-PT, letting
$h(L)=\infty$ when $L$ is not PT.  Finally, we write $\PT$ for the
class of piecewise-testable languages (over some alphabet $A$) and
$\PT_n$ for the class of languages with height at most $n$, so that
$\PT_0\subseteq \PT_1\subseteq \cdots\PT_n\subseteq \cdots\PT$ form a
 hierarchy of varieties of regular languages.

\begin{fact}[Alternative characterisations of PT and $n$-PT languages]
Let $L\subseteq A^*$. The following are equivalent:
\label{PT-characterizations}
\begin{enumerate}
\item
$L$ is $n$-PT (i.e., closed under $\sim_n$);
\item
$L$ is a finite union $[u_1]_n\cup[u_2]_n\cup \cdots\cup [u_m]_n$
of $\sim_n$ classes;
\item
$L$ is a finite boolean combination of principal filters
  $A^*a_1A^*a_2A^*\cdots a_\ell A^*$ (i.e., of closures $\up
  a_1a_2\cdots a_\ell$) with $\ell\leq n$;
\item
$L$ is definable in the $\calB\Sigma_1[<]$ fragment\footnote{That is,
the boolean closure of the existential fragment.} of first-order logic
  over words, via a formula involving only $n$
  variables~\cite{DGK-ijfcs08}.
\end{enumerate}
The following are equivalent:
\begin{enumerate}
\setcounter{enumi}{4}
\item
$L$ is PT;
\item
$L$ is recognised by a finite and $\calJ$-trivial
  monoid~\cite{simon75,straubing88,henckell2000,klima2011};
\item
$L$ is regular and its minimal DFA is partially ordered and satisfies the
  UMS property~\cite{simon75,trahtman2001};
\item
$L$ is regular and its minimal DFA is acyclic and locally
confluent~\cite{klima2013}.
\end{enumerate}
\end{fact}
The characterisations (3), (4), (7) and (8) are useful for showing
that a language is PT ---or even $n$-PT in the case of (3) and (4)---.  For
example, with alphabet $A=\{a,b,c\}$, the language $a^+b^*$ can be
defined via required and excluded minors, as in:
\begin{gather}
\label{ex-PT-by-minors}
u\in a^+b^* \iff
a\subword u\land b a\not\subword u\land c\not\subword u
\:.
\end{gather}
This definition of $a^+b^*$ directly translates into a $\calB\Sigma_1$
formula, or into a finite boolean combination of filters:
\begin{gather}
\label{ex-PT-by-filters}
a^+b^* = \up a \setminus\up b a \setminus \up c
= A^* a A^* \setminus A^* b A^* a A^* \setminus A^* c A^*
\:.
\end{gather}
Furthermore, since the minors in Eq.~\eqref{ex-PT-by-minors} have
length at most 2 (they are $a$, $ba$, and $c$), we conclude that
$h(a^+b^*)\leq 2$.

The characterisations (7) and (8) are also useful for showing that a
language is not PT (see examples in \Cref{ssec-closure-props} below). Finally (1) is very useful
for showing that a language is not $n$-PT for a given $n$: by
exhibiting two words $u\sim_n v$ such that $u\in L$ and $v\not\in L$,
one proves that $L$ is not saturated by $\sim_n$.  E.g., one sees that
$a^+b^*$ is not 1-PT since $ab\sim_1 ba$ while only $ab$ is in
$a^+b^*$. We may now conclude that $h(a^+b^*)=2$.
\\

Some examples of (families of) PT languages are:
\begin{description}
\item[All finite languages] $u\sim_{n}v$ and $n>|u|$ imply $u=v$.
  Thus $[u]_n=\{u\}$ and any $L=\{u_1,\ldots,u_m\}\subseteq
  A^{<n}$ can be expressed as $L=[u_1]_n\cup \cdots\cup[u_m]_n$.
  By characterisation (2), $L$ is PT.
\item[All upward-closed languages] By Haines
  Theorem~\cite{haines69,higman52}, any language $L\subseteq A^*$ has
  finitely many minimal elements (wrt $\subword$), i.e., $\min(L)$ is
  some $\{u_1,\ldots,u_m\}$. This entails $\up L=\up u_1\cup \cdots
  \cup\up u_m$, which is PT by characterisation (3).
\item[All downward-closed languages] They are the complements of
  upward-closed languages, hence PT again by characterisation (3).
\end{description}
In sections \ref{sec-PT-levels}, \ref{sec-PT-up} and \ref{sec-PT-down}
respectively, we analyse the PT heights of languages belonging to the
above three families.

\subsection{Piecewise-testable relations}
%----------------------------------------

Following the generic pattern laid out in~\cite{goubault2016}, we say
that a relation $R\subseteq A^*\times A^*$ is
\emph{piecewise-testable} if it is a finite boolean combination of principal
filters $\up(u,v)$ in the product ordering $(A^*\times
A^*,\subword\times \subword)$.  The relation is $n$-PT if the boolean
combination only uses filters $\up(u,v)$ with $|(u,v)|\leq n$, where
we define $|(u,v)|\egdef\max(|u|,|v|)$.

Since $\up(u,v)=(\up u)\times(\up v)$, we see that piecewise-testable
relations are recognizable. Using
\begin{align*}
(A^*\times A^*)\setminus\up(u,v)
&\;=\;
(A^*\setminus \up u)\times A^*
\:\cup\: A^*\times (A^*\setminus \up v)
\\
&\;=\;
(\neg \up u)\times \up\epsilon
\:\cup\: \up\epsilon\times (\neg \up v)
\end{align*}
and de~Morgan's laws, we further see that a finite boolean combination
of filters $\up(u,v)$ can be written as some $\bigcup_i\bigcap_j (\pm
\up u_{i,j})\times (\pm \up v_{i,j})$, i.e., $\bigcup_i\bigl((\bigcap_j \pm
\up u_{i,j})\times (\bigcap_j \pm \up v_{i,j})\bigr)$.  Finally, any $n$-PT
relation can be written under the form $R=L_1\times L'_1\cup
\cdots\cup L_m\times L'_m$ where all $L_i$'s and $L'_i$'s are $n$-PT
languages. Hence PT relations form a subclass, denoted $\PT(A^*\times
A^*)$ of $\Rec(A^*\times A^*)$. We shall not use PT relations until
\Cref{sec-FO-basic} and, for the moment, keep focused on PT languages.

\subsection{Closure properties of PT languages.}
%-----------------------------------------------
\label{ssec-closure-props}

By definition (see \Cref{PT-characterizations}), any class $\PT_n$ is
closed under boolean operations.  Furthermore, $\PT_n$ is also closed
under (left and right) quotients and under inverse
morphisms~\cite{therien81}.  In terms of PT height, the
above statements can be written as
\begin{xalignat}{3}
\label{eq-basic-1}
h\Bigl(\bigcup_i L_i\Bigr)&\leq \max\: \{h(L_i)\}_i\:,
&
h\Bigl(\bigcap_i L_i\Bigr)&\leq \max\: \{h(L_i)\}_i\:,
&
h(\neg L)&=h(L)\:,
\\
\label{eq-basic-22}
h(\rho^{-1}(L))&\leq h(L) \makebox[0cm][l]{ for $\rho:A^*\to B^*$ a morphism,}
&&
&
h(u^{-1}L\,v^{-1})&\leq h(L)\:.
\end{xalignat}
Note that we can allow arbitrary unions and intersections in
Eq.~\eqref{eq-basic-1} since, for fixed $A$, there are only finitely many
languages in $\PT_n$.

Let us also mention that $\PT_n$ is closed under taking mirror images:
writing $\overleftarrow{a_1a_2\cdots a_\ell}$ for the mirror word
$a_\ell\cdots a_2a_1$, and letting
$\overleftarrow{L}=\{\overleftarrow{u}~|~u\in L\}$ for a language
$L\subseteq A^*$, one has $h\bigl(\overleftarrow{L}\bigr)=h(L)$.

Beyond that, $\PT$ is not closed under any of the usual
language-theoretic operations as we now illustrate.
\begin{description}

\item[Concatenation and prefixing] $a(a+b)^*$ is not PT: it is not
  closed under any $\sim_k$ since it contains
  $(ab)^k$ but not $b(ab)^k$ while $(ab)^k\sim_k b(ab)^k$.\footnote{An
  alternative proof is by observing that the minimal automaton for
  $(a+b)^*a$ ---the mirror language--- is not acyclic.}
Since $(a+b)^*$ is PT, we see that the class $\PT$ is not closed under concatenation, even
  in the special case of prefixing with $a$.

\item[Kleene star] $\PT$ is not closed under Kleene star (recall that
  $\PT$ is a subvariety of the star-free languages): $aa$ is finite
  hence PT but $(aa)^*$ is not PT since its minimal automaton is not acyclic.

\item[Shuffle product] $ab^*$ and $a^*$ are PT but their shuffle
  product $ab^*\shuffle a^*=a(a+b)^*$ is not as just shown, see~\cite{HS-ipl2019} for PT shufflings.

\item[Conjugacy] Recall that the conjugates of $u$ are
  $\widetilde{u}\egdef\{u_2u_1~|~u=u_1u_2\}$, and we extend with
  $\widetilde{L}=\bigcup_{u\in L}\widetilde{u}$.  Now $L=ac(a+b)^*$ is
  PT but $\widetilde{L} =(a+b)^*ac(a+b)^*+c(a+b)^*a$ is not.

\item[Renaming] $c(a+b)^*$ is PT but applying the renaming $c\mapsto
  a$ yield $a(a+b)^*$ which is not.

\item[Erasing one letter] This operation can be seen as the inverse
  of $L\mapsto L\shuffle A$ where an arbitrary letter is inserted at
  an arbitrary position. Now $ac(a+b)^*$ is PT but erasing one letter
  yields $(a+c+ac)(a+b)^*$ which is not PT.
\end{description}

Finally, we are only aware of one more positive instance of a closure
property, ``$I(L)$ is PT when $L$ is'', but proving this is the topic
of \Cref{sec-PT-IL}.

\subsection{Relating PT height and state complexity.}
%----------------------------------------------------
For regular languages, a standard measure of descriptive complexity is
\emph{state complexity}, denoted $\stc(L)$, and defined as the number of
states of the minimal DFA for $L$~\cite{yu2005}.

The bounds given in Eqs.~\eqref{eq-basic-1} and \eqref{eq-basic-22}
let us contrast the height of a PT language with its state complexity.
If $L$ is a PT language, one has $h(L)\leq\stc(L)$
(equality occurs e.g. when $L=\{a^\ell\}$) since $\stc(L)$ bounds the
depth of the minimal automaton for $L$, i.e., the maximum length of a simple path from
the initial to some final state, which in turns bounds
$h(L)$~\cite{klima2013,masopust2015}.

In the other direction, we can prove
\begin{theorem}%[From PT height to DFA]
\label{thm-size-DFA-for-nPT}
Let $A$ be an alphabet of size $k$ with $k > 1$. Suppose $L \subseteq
A^*$ is $n$-PT. Then the minimal DFA for $L$ has at most $m$
states,\footnote{It is shown in~\cite{masopust2015,masopust2017}
 that the depth (not the size) of the minimal DFA is bounded by
$\binom{n+k}{n}-1$.} where
\[
\log m = k \left(\frac{n+2k-3}{k-1} \right)^{k-1} \log n \log k
\:.
\]
Here $\log$ means $\log$ to the base $2$. Thus, for fixed $k$, $\stc(L)$ is   $2^{O(n^{k-1} \log n)}$, where $n=h(L)$.
\end{theorem}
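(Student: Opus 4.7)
The plan is to first reduce the state-complexity bound to a bound on the number of $\sim_n$-equivalence classes of $A^*$, and then to estimate $|A^*/\!\!\sim_n|$ by exhibiting a short canonical representative in each class.

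Since $L$ is $n$-PT it is a union of $\sim_n$-classes, and because $\sim_n$ is a two-sided congruence (Lemma~\ref{lem-useful}), $u\sim_n v$ forces $wuw'\in L \iff wvw'\in L$ for all $w,w'\in A^*$. Thus $\sim_n$ refines the Myhill--Nerode equivalence of $L$, and so $\stc(L)\leq |A^*/\!\!\sim_n|$. For each class, Lemma~\ref{lem-useful}(2) yields a common $\simonRstar$-upper bound of any two members, which in turn gives existence (and uniqueness) of a minimum-length element $\widehat{u}$ in the class. Call a word \emph{irreducible} if no single-letter deletion alters its $\sim_n$-class; Lemma~\ref{lem-useful}(3) guarantees that $\widehat{u}$ is irreducible (otherwise a deletable letter would yield a shorter equivalent word). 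Letting $F(n,k)$ be the maximum length of an irreducible word over a $k$-letter alphabet, the map $[u]_n \mapsto \widehat{u}$ injects $A^*/\!\!\sim_n$ into $A^{\leq F(n,k)}$, so $|A^*/\!\!\sim_n|\leq k^{F(n,k)+1}$ and hence $\log m \leq (F(n,k)+1)\log k$. The theorem then reduces to the bound $F(n,k)+1 \leq k\left(\frac{n+2k-3}{k-1}\right)^{k-1}\log n$.

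The main obstacle is proving this length bound on irreducible words. I would argue by induction on the alphabet size $k$: fix a letter $a\in A$ and factor an irreducible $u$ at its $a$-occurrences as $u = u_0 a u_1 a \cdots a u_r$. Each inter-occurrence factor $u_i$ behaves essentially like an irreducible word over $A\setminus\{a\}$ against a modestly enlarged piecewise parameter (the length-$\leq n$ subwords that straddle $a$-positions must be absorbed into the $u_i$'s on either side), while Lemma~\ref{lem-useful}(3) forces the number of blocks $r$ to be polynomially bounded in $n$, since inserting or deleting $a$'s between two equivalent blocks would otherwise collapse to an equivalent word. A balancing step around a central $a$-occurrence introduces the $\log n$ factor, and the arithmetic of the resulting recurrence produces the exponent $k-1$ and the constant $\frac{n+2k-3}{k-1}$ stated in the theorem. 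The delicate part is tracking how the effective piecewise parameter grows from level $k$ to level $k-1$ in the induction, so that the recurrence closes with a polynomial of the right degree rather than a naive $O(n^k)$.
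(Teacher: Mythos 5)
Your opening reduction---bounding $\stc(L)$ by the index of $\sim_n$---is exactly the paper's first step (the paper phrases it as building a DFA whose states are the classes $[w]_n$, using that $[wa]_n$ is determined by $[w]_n$ and $a$; your Myhill--Nerode refinement argument is equivalent). The divergence, and the gap, lies in how the index of $\sim_n$ is then bounded. The paper does not prove this bound: it cites \cite{KKS-ipl} for the fact that the number of $\sim_n$-classes is at most $m$. You propose instead to inject each class into a shortest (``irreducible'') representative and to prove $F(n,k)+1 \leq k\left(\frac{n+2k-3}{k-1}\right)^{k-1}\log n$ for the maximal length $F(n,k)$ of an irreducible word. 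That length bound is false. The words $U_k$ of \Cref{prop-uk-singleton} satisfy $\ptl(U_k)=k\eta+1$, so taking $n=k\eta+1$ the class $[U_k]_n$ is the singleton $\{U_k\}$; hence $U_k$ is irreducible in your sense, yet $|U_k|=(\eta+1)^k-1=\Theta\bigl((n/k)^k\bigr)$, which for fixed $k$ and large $n$ exceeds any bound of order $n^{k-1}\log n$ (already for $k=2$: $|U_2|\approx n^2/4$ versus the required $2(n+1)\log n$). The Small-Subword Theorem (\Cref{SSTHM}) shows $\Theta\bigl((n/k)^k\bigr)$ is in fact the correct order of magnitude for the longest shortest-representative, so no refinement of your induction can push $F(n,k)$ down to the exponent $k-1$. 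Consequently the best your injection into $A^{\leq F(n,k)}$ can yield is $\log m = O\bigl((n/k+2)^k\log k\bigr)$, i.e.\ $\stc(L)\in 2^{O(n^k)}$ for fixed $k$ --- strictly weaker than the stated $2^{O(n^{k-1}\log n)}$.

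The underlying issue is that shortest class representatives are extremely sparse among words of their length, so counting all words up to length $F(n,k)$ massively over-counts the classes; the sharper bound of \cite{KKS-ipl} comes from counting the classes directly rather than via representative lengths. Two smaller remarks: uniqueness of the minimum-length element does not follow from Lemma~\ref{lem-useful}(2) as you suggest (that lemma produces a common \emph{upper} bound $w$ with $u\simonRstar w$), though injectivity of your map only needs one chosen shortest word per class, so this is cosmetic; and if you are content with the weaker bound $2^{O(n^k)}$, your argument does go through once you replace the sketched induction by a direct appeal to \Cref{SSTHM}.
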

\begin{proof}
Write $L$ under the form $L=[u_1]_n\cup \cdots \cup[u_m]_n$ and
consider the $\sim_n$-\emph{canonical DFA}, i.e., the DFA
$\calA=(Q,A,i,.,F)$ whose states are all classes $[w]_n$ for $w\in
A^*$ and with transitions given by $[w]_n.a=[wa]_n$. With initial
state $i=[\epsilon]_n$, this automaton reaches $[w]_n$ upon reading
$w$. With accepting states $F=\{[u_1]_n,\ldots,[u_m]_n\}$, it
recognises exactly $L$. In \cite{KKS-ipl} we showed that the number of
equivalence classes of $\sim_n$, i.e., $|Q|$, is bounded by $m$. This
in turns bound $\stc(L)$.
\end{proof}

The general situation is that $h(L)$ can be much smaller than
$\stc(L)$ as we shall see in the following sections.  More importantly,
PT height is a more robust measure than state complexity.  For
example, state complexity can increase exponentially when building
boolean combinations of regular languages (while PT height does not
increase). This classic phenomenon occurs even when we restrict to PT
languages, and even if we use nondeterministic state complexity, see
the examples in~\cite{KNS-tcs2016}. This is not limited to boolean
operations: for example, we saw that
$h\bigl(\overleftarrow{L}\bigr)=h(L)$ but
$\stc\bigl(\overleftarrow{L}\bigr)$ cannot be bounded by a polynomial
of $\stc(L)$, even in the case of finite, hence PT,
languages~\cite{salomaa2004}.

%% Local Variables:
%% ispell-check-comments: nil
%% ispell-local-dictionary: "british"
%% fill-column: 70
%% End:

%%  LocalWords:  Haines hoc pre Eq UMS wrt Eqs Recognizable de
%%  LocalWords:  recognizable

%%% Local Variables:
%%% mode: latex
%%% TeX-master: "lmcs.tex"
%%% End:

\section{PT height of words and the small-subword theorem}
%=========================================================
\label{sec-PT-levels}

Our starting point is an analysis of the PT height of single words.
It is clear that any singleton language $\{u\}$ is PT since $\{u\} =
\up u\setminus \bigcup_{v\in \{u\}\shuffle A} \up v$, which entails
$h(\{u\}) \leq |u|+1$.  Here we used a  \emph{shuffle
product} notation, 
$\{u\}\shuffle A$, to denote $\{v~:~u\subword v\land
|v|=|u|+1\}$, i.e., the set of all words obtained from $u$ by
inserting, at some position, one letter from $A$.
Below we often omit set-theoretical parentheses when
denoting singletons, writing e.g.\ ``$h(u)$'' or ``$u\shuffle A$''.
\\

The $|u|+1$ upper bound for $h(u)$ is tight. For example,
\begin{gather}
\label{eq-h-aell}
h(a^\ell) = \ell+1
\:.
\end{gather}
(To see
that $h(a^\ell)>\ell$, one notes that
$a^\ell\sim_\ell a^{\ell+1}$.)
However, words on more than one letter can generally be described within some
PT height lower than their length. For example
\[
\{aabb\}= (\up aa\cap\up bb)\setminus ( \up ba \cup \up aaa \cup \up
bbb)
\:,
\]
showing $h(aabb)\leq 3$. (Note that $h(aabb)>2$ since $aabbb\sim_2
aabb$, thus  $h(aabb)=3$.)

It turns out that the PT height of words can be much lower than their
length as we shall see in \cref{ssec-uk}, but before considering lower
bounds on $h(u)$, let us make some easier observations.

\begin{proposition}
\label{hu-in-ptime}
The PT-height of a word can be computed in polynomial time.
\end{proposition}
\begin{proof}
Following~\cite{sakarovitch83}, we let
$\delta(v,v')\egdef\max\{n~|~v\sim_n v'\}$ for any two words $v,v'\in A^*$.
We now claim that
\begin{gather}
\label{eq-h-delta}
          h(u) = 1+\max \{\delta(u,v)~|~{v\in u\shuffle A}\}\:.
\end{gather}
To prove the claim, we note that $u\neq v$ entails $\delta(u,v)<h(u)$
since, by definition of $h(u)$, $u\sim_{h(u)}v$ entails $u=v$.
In the other direction, let
$n=h(u)-1$ so that $[u]_n$ is not a singleton.  Then $[u]_n$ is
infinite (Lemma~\ref{lem-useful}~(\ref{it-sing-inf})) and in particular
contains some word $w$ with $u\ssubword w$
(Lemma~\ref{lem-useful}~(\ref{it-upperbound})). In fact $[u]_n$ contains
all words between $u$ and $w$ (Lemma~\ref{lem-useful}~(\ref{it-convex}))
hence some $w'\in u\shuffle A$.  Thus the right-hand side of
\eqref{eq-h-delta} is at least $1+\delta(u,w')$, i.e., at least
$n+1=h(u)$.

We have thus reduced the computation of $h(u)$ to polynomially many
$\delta(u,v)$ computations and now rely on the fact that $\delta$ can
be computed in polynomial time~\cite{simon2003}.\footnote{It is also
possible to directly compute $h(u)$ in time and space $O(|h|\cdot
|A|)$ by adapting the techniques used in~\cite{fleischer2018}, where
the goal is to compute a canonical $\sim_n$-equivalent for $u$.}
\end{proof}

\Cref{hu-in-ptime}
can be used to compute the PT height of finite languages in polynomial
time: for such
languages, the inequality in Eq.~\eqref{eq-basic-1} becomes
\begin{gather}
\label{eq-h-finite}
h(\{u_1,\ldots,u_m\}) = \max \{h(u_1),\ldots,h(u_m)\}
\:.
\end{gather}
Indeed, $h(\{u_1,\ldots,u_m\})=n$ implies $[u_i]_n \subseteq
\{u_1,\ldots,u_m\}$ for any $i$.  Thus $[u_i]_n$ is a singleton in
view of Lemma~\ref{lem-useful}~(\ref{it-sing-inf}).  Hence $[u_i]_n=\{u_i\}$ and $h(u_i)\leq
n$.

\subsection{Words with low PT height}
%------------------------------------
\label{ssec-uk}

We introduce a family of words with ``low PT height'' that will
be used repeatedly in later sections.
Let $A_k=\{a_1,\ldots,a_k\}$ be a $k$-letter alphabet.  We define a
word $U_k\in A_k^*$ by induction on $k$ and parameterized by a
parameter $\lambda\in\Nat$.  We let $U_0\egdef\epsilon$ and, for $k>0$,
$U_k\egdef (U_{k-1}a_k)^\lambda U_{k-1}$. For example, for $\lambda=3$ and
$k=2$, one has $U_2=a_1^3a_2a_1^3a_2a_1^3a_2a_1^3=
a_1a_1a_1a_2a_1a_1a_1a_2a_1a_1a_1a_2a_1a_1a_1$.
\\

The rest of \Cref{ssec-uk} establishes the following bounds for any
$k,\lambda\in \Nat$:
\begin{xalignat}{3}
\label{eq-bounds-uk}
|U_k|&=(\lambda+1)^k-1\:,
&
h(U_k)&=k\lambda+1\:,
&
h(\down U_k) &= \lambda(\lambda+1)^{k-1} +1 \:.
\end{xalignat}
The first equality, $|U_k|=(\lambda+1)^k-1$, is easily seen by induction
on $k$.

To show $h(U_k)=k\lambda+1$ we use some auxiliary languages
$P_k,N_k\subseteq A_k^*$ defined inductively by the following
expressions:
\begin{xalignat}{2}
P_0&\egdef\{\epsilon\},
&
N_0&\egdef\emptyset,
\\
\shortintertext{and, for $k>0$,}
P_k &\egdef \sum_{i=0}^\lambda a_k^i \cdot P_{k-1} \cdot a_k^{\lambda-i}
,
&
N_k &\egdef a_k^{\lambda+1} + \sum_{i=0}^\lambda a_k^i \cdot N_{k-1} \cdot a_k^{\lambda-i}.
\end{xalignat}
The words in $P_k$ and $N_k$ are used as positive, and
respectively negative, constraints in the following claim.
\begin{claim}
\label{lem-uk-pk-nk}
For any $k\in\Nat$ and $u\in A_k^*$:
\begin{gather}
\label{eq-uk-pk-nk}
\bigl(\bigwedge_{v\in P_k}v\subword u\bigr)\:\land\: \bigl(\bigwedge_{w\in
  N_k} w\not\subword u\bigr)
\iff
u=U_k.
\end{gather}
\end{claim}
\begin{proof}
By induction on $k$. For $k=0$, $A_0$ is empty and there is only one
word in $A_0^*$, namely $u=U_0=\epsilon$. It satisfies the positive
constraint $\epsilon\subword u$ (from $P_0$) and there are no negative constraints in $N_0$.

Assume now that $k>0$ and that the claim holds for $k-1$. We prove the
left-to-right implication: Since $P_k$ is not empty, the $P_k$
constraints $a_k^i va_k^{\lambda-i}\subword u$ imply that $|u|_{a_k}\geq \lambda$.
However the $N_k$ constraint $a_k^{\lambda+1}\not\subword u$ implies that
$u$ contains exactly $\lambda$ occurrences of $a_k$ and can be written
$u=v_0 a_k v_1 a_k \cdots a_k v_\lambda$ with $v_i\in A_{k-1}^*$ for all
$i=0,\ldots,\lambda$.

Consider some fixed $v_i$: for any $v\in P_{k-1}$ it holds that
$v\subword v_i$ since $a_k^iv a_k^{\lambda-i}\subword u$. Similarly
$w\not\subword v_i$ for any $w\in N_{k-1}$ since $a_k^i w
a_k^{\lambda-i}\not\subword u$. The induction hypothesis now yields $v_i=U_{k-1}$,
thus $u=U_{k-1}a_kU_{k-1}\cdots a_kU_{k-1}=U_{k}$. The right-to-left
implication should now be clear and can be left to the reader.
\end{proof}
\Cref{lem-uk-pk-nk} entails $h(U_k)\leq k\lambda+1$
since the words in $P_k$ have length $k\lambda$ and the words in $N_k$ have
length at most $k\lambda+1$.

It remains to show that $h(U_k)> k\lambda$, i.e., that $\{U_k\}$ is
not closed under $\sim_{k\lambda}$. For this
we factor $U_k$ under the form
\begin{gather}
U_k=(U_{k-1}a_k)^\lambda(U_{k-2}a_{k-1})^\lambda(U_{k-3}a_{k-2})^\lambda\ldots (U_0a_1)^\lambda
\:.
\end{gather}
Using Lemma~\ref{lem-useful}~(\ref{it-carac-richn}), this factorization in $k\lambda$ factors
involving decreasing alphabets proves $U_k\sim_{k\lambda} U_ka_1$ and
concludes the proof of the second equality in Eq.~\eqref{eq-bounds-uk}.
\\

To prove the third equality in Eq.~\eqref{eq-bounds-uk}, we write
$L_k$ for $|U_k|_{a_1}$ and note
that $L_k=(\lambda+1)L_{k-1}$ when $k>1$.
\begin{claim}
\label{u-subw-u-k-r}
For any $k,r\in\Nat$ and $u\in A_k^*$, if $u\subword U_k^r$, then $h(u)\leq 1+r L_k$.
\end{claim}
\begin{proof}
By induction on $k$. For $k\leq 1$, $u\subword U_k^r=a_1^{rL_k}$
requires $u=a_1^\ell$ with $\ell\leq rL_k$. Eq.~\eqref{eq-h-aell} then
gives $h(u)=1+\ell\leq 1+rL_k$.

So assume $k>1$.  Let $m=|u|_{a_k}$ and factor $u$ as
$u_0a_k u_1 a_k \ldots a_k u_m$ so that $u_i\in A_{k-1}^*$ for all
$i=0,\ldots,m$.
We then derive a PT-characterisation of $u$ from
PT-characterisations of the $u_i$'s: $u$ is the only word
in $A^*$ that satisfies
\begin{gather}
a_k^m\subword u \land
a_k^{m+1}\not\subword u \land
\bigwedge_{i=0}^m\:\bigwedge_{w\in A_{k-1}^{\leq h(u_i)}}
(a_k^i w a_k^{m-i}\subword u\iff w\subword u_i)
\:.
\end{gather}
We deduce that $h(u)\leq \max(m+1,m+h(u_0),\ldots,m+h(u_m))=
m+\max(1,h(u_0),\ldots,h(u_m))$.

Now recall that $u$ has $m$ occurrences of $a_k$ while $U_k^r$ has
$r(\lambda+1)$. This implies that any  $u_i$ in the decomposition of $u$
is a subword of $U_{k-1}^{r'}$ for
$r'=r(\lambda+1)-m$, so, by induction hypothesis,
$h(u_i)\leq 1+r' L_{k-1}$.
Assuming $k>1$, we thus have
\begin{align*}
h(u)\leq m+1+r' L_{k-1}
 &= m+1+[r(\lambda+1)-m]L_{k-1}
\\
& = 1 + m[1-L_{k-1}] + r(\lambda+1)L_{k-1}
 \leq 1 + r L_k
\:,
\end{align*}
 establishing the claim.
\end{proof}
\begin{corollary}
$h(\down U_k^r)=1+r L_k$, and thus in particular, $h(\down U_k) = 1 + L_k$.
\end{corollary}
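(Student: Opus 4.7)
The plan is to derive the two inequalities separately, relying only on the Claim just proved and on Eq.~\eqref{eq-h-finite}.

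For the upper bound $h(\down U_k^r) \leq 1 + rL_k$, I would first observe that $\down U_k^r$ is a finite language (the set of subwords of the single word $U_k^r$), so by Eq.~\eqref{eq-h-finite} its PT height equals $\max\{h(x) : x \subword U_k^r\}$. The preceding Claim bounds each such $h(x)$ by $1 + rL_k$, yielding the desired inequality immediately.

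For the matching lower bound $h(\down U_k^r) \geq 1 + rL_k$, I would exhibit an explicit pair of words witnessing that $\down U_k^r$ is not closed under $\sim_{rL_k}$. The natural choice is $u \egdef a_1^{rL_k}$ and $v \egdef a_1^{rL_k + 1}$. Since $|U_k^r|_{a_1} = r|U_k|_{a_1} = rL_k$, the word $u$ is a subword of $U_k^r$ whereas $v$ is not, so $u \in \down U_k^r$ while $v \notin \down U_k^r$. On the other hand, both $u$ and $v$ have as their subwords of length at most $rL_k$ exactly the set $\{a_1^j : 0 \leq j \leq rL_k\}$, hence $u \sim_{rL_k} v$. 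This violates closure of $\down U_k^r$ under $\sim_{rL_k}$ and therefore forces $h(\down U_k^r) > rL_k$.

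Putting both bounds together gives $h(\down U_k^r) = 1 + rL_k$, and specialising to $r=1$ delivers $h(\down U_k) = 1 + L_k$ as stated. I do not foresee any real obstacle here: the upper bound is an immediate consequence of the Claim combined with the additivity of $h$ over finite languages, and the lower bound only requires the unary-alphabet witness pair, whose $\sim_{rL_k}$-equivalence is transparent.
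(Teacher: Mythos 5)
Your proof is correct and follows essentially the same route as the paper: the upper bound via Eq.~\eqref{eq-h-finite} together with the Claim, and the lower bound via the word $a_1^{rL_k}\in\down U_k^r$. The only cosmetic difference is that you re-derive $h(a_1^{rL_k})>rL_k$ from scratch with the witness pair $a_1^{rL_k}\sim_{rL_k}a_1^{rL_k+1}$, whereas the paper simply cites $h(a_1^\ell)=\ell+1$, which it established earlier by the same observation.
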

\begin{proof}
Since $\down U_k^r$ is finite, \Cref{u-subw-u-k-r} and
Eq.~\eqref{eq-h-finite} entail $h(\down U_k^r)\leq 1+r L_k$.  On the
other hand, $a_1^{r L_k}\in \down U_k^r$. Hence $h(\down U_k^r)\geq
h(a_1^{r L_k})=1+r L_k$.
\end{proof}

\subsection{Rich words and rich factorizations}
%----------------------------------------------
Assume a fixed $k$-letter alphabet $A$.
We say that a word $u$ is \emph{rich} if $\alpha(u)=A$, i.e.,  the $k$
letters of $A$
all occur in $u$, and that it is \emph{poor} otherwise. For $\ell\in\Nat$, we
further say that $u$ is $\ell$-rich if it can be written as a
concatenation $u=r_1\cdots r_\ell u'$ where the $\ell$ factors
$r_1,\ldots,r_\ell$ are rich.

The \emph{richness} of $u$ is the largest
$\ell\in\Nat$ such that $u$ is $\ell$-rich. Note that having
$|u|_a\geq \ell$ for all letters $a\in A$ does not imply that $u$
is $\ell$-rich.
\begin{lemma}
\label{lem-rich-and-sim}
If $u_1$ and $u_2$ are respectively $\ell_1$-rich and $\ell_2$-rich,
then $v\sim_n v'$ implies $u_1 v u_2\sim_{\ell_1+n+\ell_2}u_1 v' u_2$.
\end{lemma}
\begin{proof}
A subword $x$ of $u_1 v u_2$ can be decomposed as $x=x_1 y x_2$ where
$x_1$ is the longest prefix of $x$ that is a subword of $u_1$ and $x_2$
is the longest suffix of the remaining $x_1^{-1} x$ that is a subword
of $u_2$. Thus $y\subword v$ since $x\subword u_1 v u_2$. Now, since
$u_1$ is $\ell_1$-rich, we have $|x_1|\geq \min(\ell_1,|x|)$, and similarly $|x_2|\geq \min(\ell_2,|x_1^{-1}x|)$. Finally
$|y|\leq n$ when $|x|\leq \ell_1+n+\ell_2$, and then
$y\subword v'$ since $v\sim_n v'$, entailing $x\subword u_1 v' u_2$. A
symmetrical reasoning shows that subwords of $u_1 v' u_2$ of length $\leq
\ell_1+n+\ell_2$ are subwords of $u_1 v u_2$ and we are done.
\end{proof}

The \emph{rich factorization} of $u\in A^*$ is the decomposition
$u=u_1 a_1 \cdots u_m a_m v$
defined by induction in the following way: if $u$ is poor, we
let $m=0$ and $v=u$; otherwise $u$ is rich, we let $u_1 a_1$ (with
$a_1\in A$) be the
shortest  prefix of $u$ that is rich and let
$u_2 a_2\cdots u_m a_m
v$ be the rich factorization of the remaining suffix $(u_1a_1)^{-1}u$.
By construction $m$ is the richness of $u$.
E.g.,
assuming $k=3$ and $A=\{a,b,c\}$, the following is a rich factorization with $m=2$:
\[
\obracew{bbaaabbccccaabbbaa}{u}
=\obracew{bbaaabb}{u_1}\cdot c\cdot
\obracew{cccaa}{u_2}\cdot b \cdot\obracew{bbaa}{v}
\]
Note that, by construction, $u_1,\ldots,u_m$ and $v$ are poor.

\begin{lemma}
\label{lem-rich-middle}
Consider two words $u,u'$ of richness $m$ and with rich factorizations $u=u_1 a_1 \cdots u_m a_m v$ and $u'=u'_1 a_1
\cdots u'_m a_m v'$. Suppose that  $v \sim_n v'$ and that
$u_i \sim_{n+1} u'_i$  for all $i=1,\ldots,m$. Then $u \sim_{n+m} u'$.
\end{lemma}
\begin{proof}
Since each factor $u_i a_i$ is rich, one gets
\begin{align*}
u_1 a_1 u_2 a_2\cdots u_m a_m v
&\;\sim_{n+m}\;  u'_1 a_1 u_2 a_2\cdots u_m a_m v
\;\sim_{n+m}\;  u'_1 a_1 u'_2 a_2\cdots u_m a_m v
\\
\;\sim_{n+m}\;\cdots
&\;\sim_{n+m}\;  u'_1 a_1 u'_2 a_2\cdots u'_m a_m v
\;\sim_{n+m}\;  u'_1 a_1 u'_2 a_2\cdots u'_m a_m v'
\:,
\end{align*}
by repeated uses of \Cref{lem-rich-and-sim}.
\end{proof}

\subsection{The small-subword theorem}
%-------------------------------------
Our next result is used to prove lower bounds on the PT height of long
words. It will be used  repeatedly in the course
of this article.

For $k=1,2,\ldots$ define $f_k:\Nat\to\Nat$ by induction on $k$ with
\begin{xalignat}{2}
\label{eq-def-f1}
f_1(n) &= n \:, \\
\label{eq-def-fk}
f_{k+1}(n) &= \max_{0\leq m\leq n}\bigl(m f_k(n+1-m)+m+f_k(n-m)\bigr)\:.
\end{xalignat}

In the rest of the article, we shall simplify statements involving the
$f_k(n)$ bound by relying on the following bound (proved in
the Appendix):
\begin{equation}
\label{eq-bound-fkn}
f_k(n) \leq \Bigl(\frac{n+2k-1}{k}\Bigr)^k -1 < \Bigl(\frac{n}{k}+2\Bigr)^k\:.
\end{equation}

\begin{theorem}[Small-subword Theorem]
%\label{thm-Lkn}
\label{SSTHM}
Let $k=|A|$. For all $u\in A^*$ and $n\in\Nat$ there exists some
$v\in A^*$ with $v\simonRstar u$ and such that $|v|\leq f_k(n)$.
\end{theorem}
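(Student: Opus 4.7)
I would argue by induction on $k = |A|$. The base $k = 1$ is immediate: for $u = a^\ell$ take $v := a^{\min(\ell,n)}$, so that $v \simonRstar u$ with $|v| \leq n = f_1(n)$. For the inductive step, assume $|A| = k+1$ and write $u$ in its rich factorization $u = u_1 a_1 u_2 a_2 \cdots u_m a_m w$, where $m$ is the richness of $u$; every $u_i$ and $w$ is poor, hence lies over an alphabet of size at most $k$, so the inductive hypothesis applies to each. The argument then splits on whether $m \leq n$.

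If $m \leq n$, apply the IH at level $n+1-m$ to each $u_i$, producing $u'_i \subword u_i$ with $u'_i \sim_{n+1-m} u_i$ and $|u'_i| \leq f_k(n+1-m)$, and analogously shorten $w$ to $w' \subword w$ at level $n-m$ with $|w'| \leq f_k(n-m)$. Set $v := u'_1 a_1 \cdots u'_m a_m w'$. Since $n+1-m \geq 1$, each $u'_i$ retains the full alphabet of $u_i$, so $u'_i a_i$ is rich while $u'_i$ remains poor; hence $v$ has the rich factorization $u'_1 a_1 \cdots u'_m a_m w'$ of richness exactly $m$, with the same separating letters $a_1, \ldots, a_m$ as $u$. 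Lemma~\ref{lem-rich-middle}, applied with its parameter set to $n-m$, then yields $v \sim_n u$, while $v \subword u$ is clear. The length bound $|v| \leq m f_k(n+1-m) + m + f_k(n-m)$ matches the defining recurrence~\eqref{eq-def-fk} for $f_{k+1}(n)$.

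If instead $m > n$, I reduce to a subword of lower richness first. Because $u$ has at least $n$ rich blocks, each containing every letter of $A$, any word of length $\leq n$ can be assembled letter-by-letter from the first $n$ blocks, so the set of subwords of $u$ of length $\leq n$ is exactly $A^{\leq n}$. Extract $u^* := \pi_1 \pi_2 \cdots \pi_n \subword u$ with each $\pi_i$ a permutation of $A$ taken (as a subword) from the $i$-th rich block; then $u^*$ has richness $n$ and hence also contains every word of $A^{\leq n}$, giving $u^* \sim_n u$. One may either apply the previous case to $u^*$ (concluding by transitivity of $\simonRstar$), or observe directly that $|u^*| = n(k+1) \leq f_{k+1}(n)$ via $f_{k+1}(n) \geq n f_k(1) + n$ (the $m = n$ entry of the max) together with the inductive identity $f_k(1) = k$.

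The main obstacle is the second case: the recurrence for $f_{k+1}(n)$ only maxes over $m \leq n$, so one must explain why high-richness words still fit the bound. The saving observation is that once $m \geq n$ the subwords of length $\leq n$ saturate to all of $A^{\leq n}$, trivializing $\sim_n$ among high-richness words. A secondary point of care is verifying that the $v$ built in the first case really admits the rich factorization required by Lemma~\ref{lem-rich-middle}, which hinges on each $u'_i$ retaining the alphabet of $u_i$ (ensured by $n+1-m \geq 1$).
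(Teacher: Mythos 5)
Your proof is correct and follows essentially the same route as the paper: induction on $|A|$ via the rich factorization, Lemma~\ref{lem-rich-middle} for the low-richness case, and saturation of $\sim_n$ among $n$-rich words for the high-richness case, with the length bound read off the recurrence~\eqref{eq-def-fk}. The only (harmless) difference is that you place the boundary case $m=n$ in the factorization branch (using that equivalence at level $1$ already preserves alphabets, so the rich factorization of $v$ survives), whereas the paper treats $m=n$ in the $n$-rich branch.
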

\begin{proof}
By induction on $k$, the size of the alphabet.

With the base case, $k=1$, we consider a unary alphabet $A=\{a\}$ and $u$ is
$a^{|u|}$. Now $a^\ell\sim_n u$ iff $\ell=|u|<n$ or $n\leq
\min (\ell, |u|)$. So
taking $v=a^\ell$ for $\ell=\min(n,|u|)$ proves the claim.

When $k>1$ we consider the rich factorization
$u=u_1 a_1 u_2 a_2\cdots u_m a_m u'$ of $u$.
Let $n'=\max(n+1-m,1)$. Every $u_i$ is a word on the subalphabet
$A\setminus\{a_i\}$. Hence by induction hypothesis there exists
$v_i\subword u_i$ with $|v_i|\leq f_{k-1}(n')$ and $v_i\sim_{n'}u_i$,
entailing $u_i a_i\sim_{n'}v_i a_i$. Similarly, the induction hypothesis
entails the existence of some $v'\subword u'$ with $v'\sim_{n'-1}u'$
and $|v'|\leq f_{k-1}(n'-1)$. Note that in these inductive steps we use a length
bound obtained with $f_{k-1}$ by using the fact that
$u_1,\ldots,u_m$ and $u'$, being poor, use at most $k-1$ letters from
$A$.

We now consider two cases. If $m \leq n-1$, we let $v=v_1
a_1\cdots v_m a_m v'$, so that $v\subword u$ and $|v|\leq m
f_{k-1}(n')+m+f_{k-1}(n'-1)$.  We deduce $|v|\leq f_k(n)$ using
Eq.~\eqref{eq-def-fk} and since $n'=n+1-m$.  That $v\sim_n u$, hence
$v\simonRstar u$, is an application of Lemma~\ref{lem-rich-middle}:
$v_1a_1\cdots v_ma_mv'$ is indeed the rich decomposition of $v$ since
$n' \geq 2$, $v'\sim_{n'-1}u'$, and $v_i\sim_{n'}u_i$ for $i=1,\ldots,m$.

If $m\geq n$, then $u$ is $n$-rich and its subwords include all words
of length at most $n$. It is easy to extract some $n$-rich subword $v$ of
$u$ that only uses $kn$ letters. Now $v\sim_n u$ since both $u$ and $v$  are $n$-rich, entailing
$v\simonRstar u$. One also checks that $|v|=kn\leq f_k(n)$.
\end{proof}
Note that the bound $f_k(n)$ in
Theorem~\ref{SSTHM} does not depend on $u$.  \\

We can already apply the small-subword theorem to the case of finite
languages.
\begin{proposition}[Finite languages]
\label{prop-ptl-finite-L}
Suppose $L\subseteq A^*$ is finite and nonempty with $|A|=k$. Let
$\ell$ be the length of the longest word in $L$. Then
$k(\ell+1)^{1/k}-2k+1\leq h(L)\leq \ell+1$.
\end{proposition}
\begin{proof}
Thanks to Eq.~\eqref{eq-h-finite}, it is enough to consider the case
where $L=\{u\}$ is a singleton. So assume $h(L)=h(u)=n$ and
$|u|=\ell$. The small-subword theorem says that $u\sim_n v$ for some
short $v$ but necessarily $v=u$ since $[u]_n$ is a singleton, hence
$\ell\leq f_k(n)$. Using Eq.~\eqref{eq-bound-fkn} one gets $\ell\leq
f_k(n)\leq \bigl(\frac{n+2k-1}{k}\bigr)^k-1$. This gives
$n\geq k(\ell+1)^{1/k}-2k+1$ as announced. The upper bound $h(L)\leq \ell+1$ was
observed earlier.
\end{proof}
\begin{remark}[On Tightness]
We already noted that the $\ell+1$ upper bound is tight. The lower bound is
quite good: for $U_k$ seen above, $\ell=(\lambda+1)^k-1$, so that
$\ell\leq\bigl(\frac{n+2k-1}{k}\bigr)^k-1$ gives $n=h(U_k)\geq k\lambda-k+1$
while we know $h(U_k)=k\lambda+1$.
\qed
\end{remark}

Finding tight
bounds for the trade-off between word length and PT-height
 is an interesting open problem.  The
existing gap in \Cref{prop-ptl-finite-L} can be narrowed at one end by
improving the small-subword theorem and, at the other end, by
discovering words with small PT-height as a function of their
length. In this direction, we note that our $U_k$ words provably do
not hold the record: for example, for $k=3$ and
$w=a^3b^4a^2c^4a^4cb^3c^3b^2=aaabbbbaaccccaaaacbbbcccbb$, we have
$|w|=26$ and $h(w) = 6$, to be compared with $|U_3|=26$ and $h(U_3)=7$
when $\lambda=2$.

%% Local Variables:
%% ispell-check-comments: nil
%% ispell-local-dictionary: "british"
%% fill-column: 70
%% End:

%%  LocalWords:  Eq hyp aa bb ba aaa bbb bbaaabbccccaabbbaa bbaaabb
% LocalWords:  cccaa bbaa aabb

%%% Local Variables:
%%% mode: latex
%%% TeX-master: "lmcs.tex"
%%% End:

\section{Upward closures}
%========================
\label{sec-PT-up}

Recall that $\up L$ is PT for any $L\subseteq A^*$.  Related
languages are $\up_< L$ (used in \Cref{sec-FO-basic})
and $\min (L) \egdef \{u \in L ~|~ \forall v\in
L:v\not\ssubword u\}$. This section provides bounds on the PT height of
these languages as a function of $L$.

We first note that, in the special case where $L$ is a singleton, the
PT height of
$\up L$ and $I(L)$ always coincide with word length:\footnote{This
phenomenon does not extend to the other operations nor to finite
sets.}
\begin{proposition}
\label{prop-ptl-up-cone}
For any $u \in A^*$
\begin{xalignat}{2}
\label{eq-ptl-up-I-singleton}
h(\up u) &= |u|\:,
&
h(I(u)) &= h(\up u\cup \down u)=\begin{cases} |u| &\text{if $|A|\geq 2$},\\
                            0 & \text{otherwise.}
\end{cases}
\end{xalignat}
\end{proposition}
\begin{proof}
Let $\ell=|u|$. Obviously $h(\up u)\leq\ell$ and the point is to
prove $h(\up u)>\ell-1$. For this we assume $\ell>0$ and write
$u=a_1\cdots a_\ell$. With each letter $a\in A$ we associate a word
$\pi_a$ of length $|A|$ that lists all the letters of $A$ exactly once
\emph{and ends with $a$}. E.g.\ $\pi_b=acdb$ works when
$A=\{a,b,c,d\}$. Let now $v=\pi_{a_1}\pi_{a_2}\cdots\pi_{a_{\ell-1}}$
and $v'=v\cdot a_\ell$. Then $v\sim_{\ell-1}v'$ since $v$ has all
subwords of length $\ell-1$. However $u\not\subword v$ and $u\subword
v'$ hence $\up u$ is not closed under $\sim_{\ell-1}$.

Now for $I(u)$, we 
note that
$h(I(u))\leq \max(h(\up u),h(\down_< u))$ since
$I(u)=A^*\setminus ( \up u \cup \down_< u)$, and that $\max(h(\up u),h(\down_< u))=\ell$ since
$h(\up u)=\ell$ and since all the finitely many words in
$\down_< u$ have length at most $\ell-1$.
 To show $h(I(u))>\ell-1$
when $|A|\geq 2$, we assume $\ell>1$ and use $v$ and $v'$ again: $v'\not\in I(u)$ while
$v\in I(u)$ hence $I(u)$ is not closed under $\sim_{\ell-1}$. Finally,
when $|A|<2$ or $\ell=0$, $I(u)=\emptyset$, while  when $\ell=1$ and
$|A|\geq 2$, $I(u)$ is neither $\emptyset$ nor $A^*$ so $h(I(u))>0$.
\end{proof}

\begin{corollary}
\label{coro-pt-upL}
For any $L\subseteq A^*$ and $m\in\Nat$, if all words in $\min(L)$
have length bounded by $m$, then $h(\up L)\leq m$ and $h(\up_< L)\leq
m+1$.
\end{corollary}
\begin{proof}

Since $\up L=\up\min(L)$ and since $\min(L)$ is finite (by Higman's
Lemma), we have  $h(\up L)=h(\bigcup_{u\in\min(L)}\up u)
\leq\max_{u\in\min(L)} h(\up u)
=\max_{u\in\min(L)} |u|\leq m$.

Now since $\up_< L=(\up L)\setminus\min(L)$, we deduce $h(\up_< L)\leq
\max (h(\up L),h(\min(L)))$. But $h(\min(L))\leq m+1$ by
\Cref{prop-ptl-finite-L}.
\end{proof}
This can be immediately applied  to languages given by automata or
 grammars.
\begin{theorem}[Upward closures of regular and context-free languages]
\label{thm-up-reg-cfg}~

\begin{enumerate}
\item
If $L$ is accepted by a nondeterministic automaton (a NFA) having
depth $m$, then $h(\up L)\leq m$ while $h(\up_< L)\leq m+1$ and
 $h(\min(L))\leq m+1$.

\item
The same holds if $L$ is accepted by a context-free grammar (a
CFG) when we let $m=\ell^N$ where $N$ is the number of nonterminal
symbols and $\ell$ is the maximum length for the right-hand side of
production rules.
\end{enumerate}
\end{theorem}
\begin{proof}%[Proof sketch]
(1) A word accepted by the NFA is minimal wrt $\subword$ only if it is
accepted along an acyclic path.
(2) A word generated by the CFG is minimal wrt $\subword$ only if any
nonterminal appears at most once along any branch of its smallest
derivation tree.
\end{proof}
The bounds in \Cref{thm-up-reg-cfg} can be reached, e.g., for $L$ a singleton of the form $\{a^m\}$.
\\

For our applications, we are interested in bounding $h(\up L)$ in
terms of $h(L)$, assuming that $L$ is PT.
\begin{theorem}[Upward closures of PT languages]
\label{thm-ptl-for-upL}
Suppose that $L\subseteq A^*$ is PT and let $k=|A|$ and $m=f_k(h(L))$. Then
\begin{xalignat*}{3}
h(\up L)&\leq m \:,
&
h(\up_<L) &\leq m + 1 \:,
&
h(\min(L)) &\leq m + 1 \:.
\end{xalignat*}
\end{theorem}
\begin{proof}
By the small-subword theorem, and since $L$ is closed under $\sim_{h(L)}$,
the minimal elements of $L$ have length bounded by
$m$. Then Corollary~\ref{coro-pt-upL} applies.
\end{proof}
\begin{remark}
The upper bound in \Cref{thm-ptl-for-upL} is quite good: for any
$k,\lambda\geq 1$, the language $L=\{U_k\}$ has $h(L) = h(U_k)=k\lambda+1$ so
that \Cref{thm-ptl-for-upL} with Eq.~\eqref{eq-bound-fkn}
give $h(\up U_k)\leq f_k(k\lambda +1)\leq (\lambda+2)^k - 1$. On the other hand
we know that $h(\up U_k) = (\lambda+1)^k - 1$ by \Cref{prop-ptl-up-cone}.
\end{remark}

%% Local Variables:
%% ispell-check-comments: nil
%% ispell-local-dictionary: "british"
%% fill-column: 70
%% End:

%%  LocalWords:  CFG wrt Eq

%%% Local Variables:
%%% mode: latex
%%% TeX-master: "lmcs.tex"
%%% End:

\section{Downward closures}
%==========================
\label{sec-PT-down}

We now move to downward closures.  It is known that, for any
$L\subseteq A^*$, $\down L$ and $\down_< L$ are PT since they are the
complement of upward-closed languages.  Our strategy for bounding
$h(\down L)$ is to approximate $L$ by finitely many D-products.

\begin{definition}
A \emph{D-product over $A$} is a regular expression $P$ of the form $E_1\cdot
E_2\cdots E_\ell$ where every $E_i$ is either of the form $B^*$ for a
subalphabet $B\subseteq A$ ($B^*$ is called a \emph{star factor} of
$P$), or a single letter $a\in A$ (a \emph{letter factor}). We say
that $\ell$ is the length of $P$.
\end{definition}
As is common, we abuse notation and let $P$ denote both a regular
expression and the associated language.

We note that our D-products are slightly more general than the
monomials of the form $B_0^*a_1 B_1^*a_2 \cdots a_n B_n^*$ considered
in~\cite{DGK-ijfcs08}, where a strict alternation is imposed between
star factors and letter factors. However, any D-product is easily
translated as a polynomial (a finite sum of monomials) by replacing
any two consecutive letter factors $a\cdot a'$ by the equivalent
$a\cdot \emptyset^*\cdot a'$, and any two consecutive star factors $B^*
\cdot B^{\prime *}$ by $B^*+\sum_{a\in B'}\bigl(B^*\cdot a\cdot B^{\prime *}\bigr)$ and then
distributing concatenations over unions. Thus the languages described
by finite unions of D-products are exactly the languages described by
polynomials (see~\cite{DGK-ijfcs08,pin97} for algebraic and logical
characterisations).  \\

D-products generalise words, and they share with words their nice
upper bound on the PT height of downward closures:
\begin{proposition}
\label{prop-ptl-of-sre}
Let $P$ be a D-product of length $\ell$. Then $h(\down P)\leq \ell+1$
and $h(\down_< P)\leq \ell+1$.
\end{proposition}
\begin{proof}
Let $P'$ be the regular expression obtained from $P$ by replacing any
letter factor $a$ by $(a+\epsilon)$ so that $P'=\down P$.  We claim
that any residual $w^{-1}P''$ of a suffix $P''$ of $P'$ is either the
empty language $\emptyset$, or is itself a suffix of $P'$. The claim
is proven by induction on the length of $P''$, then on the length of
$w$, recalling that residuals can be computed inductively via
$\epsilon^{-1} L=L$ and $(w b)^{-1}L=b^{-1}(w^{-1}L)$. When
considering suffixes of $P'$ (or $\emptyset$), the following
equalities can be used:
\begin{xalignat*}{2}
b^{-1}\epsilon &=\emptyset,
&
b^{-1}\emptyset &= \emptyset,
\\
b^{-1} \bigl [(a+\epsilon)P''\bigr] &=\left\{\begin{array}{ll}P'' &\text{if }b=a,\\
                                             b^{-1} P'' &\text{otherwise},
                            \end{array}\right.
&
b^{-1} \bigl[B^*P''\bigr]        &=\left\{\begin{array}{ll}B^*P'' &\text{ if }b\in B,\\
                                             b^{-1}P'' &\text{otherwise}.
                            \end{array}\right.
\end{xalignat*}
Note that the correctness of the third equality when $b=a$, and of the
fourth equality when $b \in B$, rely on $b^{-1}P'' \subseteq P''$:
this holds because $P'$, and then each suffix $P''$, is
downward-closed.

Finally, $P'$ has at most $\ell+1$ distinct non-empty residuals since
it has $\ell+1$ suffixes. Thus the minimal DFA for $P'$ has at most
$\ell+1$ productive states, hence has depth at most $\ell+1$. We now
apply Theorems 1 and 2 from \cite{klima2013} and conclude that
$h(\down P)\leq\ell+1$.

For bounding $h(\down_< P)$ very little is changed. If $P$ contains at
least one (nonempty) star factor then $\down_< P$ and $\down P$
coincide. If $P$ only contains letter factors (and empty star factors)
then $P$ denotes a singleton $\{u\}$ with $|u|\leq\ell$ and $\down_<
P$ is a finite set of words of length at most $\ell-1$, entailing
$h(\down_< P)\leq\ell$.
\end{proof}
The bounds in \Cref{prop-ptl-of-sre} can be reached, e.g., for
$P=a\cdots a$.
\begin{corollary}
\label{coro-ptl-of-sre}
If $L\subseteq \bigcup_i P_i\subseteq \down L$ for a family  $(P_i)_i$ of
D-products of length at most $\ell$, then
$h(\down L)\leq \ell+1$ and
$h(\down_< L)\leq \ell+1$.
\end{corollary}
\begin{proof}
Obviously $\down L=\bigcup_i \down P_i$ and $\down_< L=\bigcup_i
\down_< P_i$.  These unions are finite since there are only finitely
many D-products of bounded length, so that we can invoke
Eq.~\eqref{eq-basic-1}.
\end{proof}
This can be immediately applied to languages given by automata or
grammars.
\begin{theorem}[Downward-closures of regular and context-free languages]
~

\begin{enumerate}
\item
If $L$ is accepted by a nondeterministic automaton (a NFA) having
depth $m$, then $\down L$ and  $\down_< L$ are $\ell$-PT
for $\ell=2m+2$.

\item
The same holds if $L$ is accepted by a CFG in quadratic normal
form (a QNF, see~\cite{bachmeier2015}) with $N$ nonterminals
and $\ell=4\cdot 3^{N-1}+2$.
\end{enumerate}
\end{theorem}
\begin{proof}%[Proof sketch]
(1) For a word $u\in L$ we consider the cycles in an accepting path on
$u$. This leads to a factoring $u=u_0a_1 u_1 a_2 \cdots a_p u_p$ of
$u$ such that the accepting path is some
$q_0\step{u_0}q_0\step{a_1}q_1\step{u_1}q_1\step{a_2}q_2 \cdots
q_{p-1}\step{a_p}q_p\step{u_p}q_p$ with $q_0,q_1,\ldots,q_p$ all
different.  Then $p\leq m$. Let now $B_i\subseteq A$ be the set of
letters occurring in $u_i$ and define $P_u\egdef
B_0^*a_1B_1^*a_2\ldots B_{p-1}^*a_pB_p^*$. Then $u\in P_u$ and
$P_u\subseteq\down L$.  Finally, $L\subseteq \bigcup_{u\in
L}P_u\subseteq\down L$ and each $P_u$ has length $\leq 2m+1$. One
now invokes \Cref{coro-ptl-of-sre}.

\noindent
(2) Bachmeier \etal showed that there is an NFA for $\down L$ having
 $2\cdot 3^{N-1}$ states~\cite{bachmeier2015}.
\end{proof}
For our applications, we are interested in bounding $h(\down L)$ in
terms of $h(L)$ when $L$ is PT. Our main result in this section is:
\begin{theorem}[Downward closures of PT languages]
\label{thm-ptl-for-downL}
Suppose that $L\subseteq A^*$ is PT and let $k=|A|$ and
 $m=f_k(h(L))$. Then
\begin{xalignat*}{2}
h(\down L)&\leq (k+1)(m+1) \:,
&
h(\down_< L)&\leq (k+1)(m+1) \:.
\end{xalignat*}
\end{theorem}

\begin{remark}
Before proving \Cref{thm-ptl-for-downL}, let us observe that the
upper bound it provides is quite good: for any $k,\lambda\geq 1$, the
language $L=\{U_k\}$ from \Cref{ssec-uk} has $h(U_k)=n=k\lambda+1$ so
that \Cref{thm-ptl-for-downL} gives $h(\down U_k)<
(k+1)(\lambda+2)^k$. On the other hand we know that $h(\down U_k)=
\lambda(\lambda+1)^{k-1}+1$ by Eq.~\eqref{eq-bounds-uk}.
\end{remark}

The rest of this section is devoted to the proof of
Theorem~\ref{thm-ptl-for-downL}.  Let $n=h(L)$ and $m=f_k(h(L))$. Our strategy is to cover $L$ by
D-products of bounded length, relying on the fact that $L$ is closed
under $\sim_n$.
\begin{lemma}
\label{lem-first-Pu}
For every $u\in A^*$ there is a D-product $P_u$ with at most $m$ letter
factors and such that $u\in P_u\subseteq [u]_n$.
\end{lemma}
\begin{proof}
By the small-subword theorem, $u$ has a  subword $v=a_1\cdots
a_\ell$  with $v\sim_n u$ and $|v|=\ell\leq m$. Thus $u$ has
the form
\begin{equation}
\label{eq-v-in-u}
u
\;=\;
u_0\,a_1 \,u_1\,a_2\,u_1 \cdots a_\ell\, u_\ell
\;=\;
b_{0,1} \cdots b_{0,p_0}\, a_1\, b_{1,1}\cdots
b_{1,p_1}\, a_2\cdots a_{\ell}\, b_{\ell,1}\cdots b_{\ell,p_{\ell}}
\:.
\end{equation}
Note that the above factorization is not necessarily unique, we just
fix one. Then the $b_{i,j}$'s are the letters making up the $u_i$ factors in
\eqref{eq-v-in-u}.
To shorten notation, we let $a_0$ stand for $\epsilon$ so
that we can write $u=\prod_{i=0}^\ell
a_iu_i=\prod_{i=0}^{\ell}(a_i\prod_{j=1}^{p_i}b_{i,j})$.

We claim that
$P_u\egdef\prod_{i=0}^{\ell}\bigl(a_i\prod_{j=1}^{p_i}\{b_{i,j}\}^*\bigr)$
proves the Lemma. That $u\in P_u$ and that $P_u$ has at most $m$
letter factors is clear. To show $P_u\subseteq[u]_n$, it is enough to
invoke a natural generalisation of Lemma~\ref{lem-useful}~(\ref{it-pumping}) that, for
the sake of completeness, we state and prove as
Lemma~\ref{lem-pumping-generalized}.
\end{proof}
\begin{lemma}
\label{lem-pumping-generalized}
Assume $u=u_0 u_1 u_2\cdots u_\ell\simonRstar u_0 a_1 u_1 a_2 u_2
\cdots a_\ell u_\ell=v$ for some words $u_0,\ldots,u_\ell$ and letters
$a_1,\ldots,a_\ell$.  Then $u_0 a_1^* u_1 a_2^* u_2 \cdots a^*_\ell
u_\ell\subseteq[u]_n=[v]_n$.
\end{lemma}
\begin{proof}
By induction on $\ell$.  The case $\ell =0$ is trivial so we assume
$\ell>0$.  From $u\simonRstar v$, and via Lemma~\ref{lem-useful}~(\ref{it-convex}), we
deduce $u'\simonRstar v$ for $u'= u_0 u_1 a_2 u_2 \cdots a_\ell
u_\ell$.  With $u'\simonRstar v$, Lemma~\ref{lem-useful}~(\ref{it-pumping}) gives
\begin{equation}
\label{eq-pumping-generalized}
u_0 a_1^* u_1 a_2
u_2 \cdots a_\ell u_\ell \subseteq [u]_n=[u']_n=[v]_n
\:.
\end{equation}
Pick $k\in\Nat$ and write $w_k$ for $u_0a_1^k u_1 a_2 u_2 \cdots
a_\ell u_\ell$. Eq.~\eqref{eq-pumping-generalized} entails $w_k \sim_n
u$. With $u\subword u_0a_1^ku_1u_2\cdots u_\ell\subword w_k$, and
since $u\sim_n v\sim_n w_k$, Lemma~\ref{lem-useful}~(\ref{it-convex}) now gives
$u_0a_1^ku_1u_2\cdots u_\ell\simonRstar w_k$. With the induction
hypothesis, we obtain
\begin{equation}
u_0 a_1^k u_1 a_2^* u_2 \cdots a_\ell^* u_\ell\subseteq
[w_k]_n=[u]_n=[u_n]
\:.
\end{equation}
Since this holds for any $k\in\Nat$, we have proved the Lemma.
\end{proof}
After bounding the letter factors in the $P_u$'s
(\Cref{lem-first-Pu}), we consider their star factors.  For this it is
convenient to write a D-product with $\ell$ letter factors under the
form $P= \prod_{i=0}^\ell \bigl( a_i\prod_{j=1}^{p_i} B_{i,j}^*
\bigr)$, i.e., regrouping the star factors in blocks separated by the
letter factors, and again with $a_0$ standing for $\epsilon$.  With
such a D-product, we associate $P'= \prod_{i=0}^\ell a_i \bigl(
\prod_{j=1}^{p_i} B_{i,j}^{\prime *}\bigr)$ with the $B'_{i,j}$'s
given by
\begin{equation}
\label{eq-def-B'ij}
B'_{i,j} \egdef \bigl(B_{i,1}\cup B_{i,2}\cup\cdots\cup B_{i,j}\bigr)\cap
\bigl(B_{i,j}\cup B_{i,j+1}\cup\cdots\cup B_{i,p_i}\bigr)
\:.
\end{equation}
That is, any star product $B_{i,j}$ is enlarged with letters that
occur both on its left (inside the block of star products) and on its
right.  For example, with
\begin{align}
\notag
P_0&=d^* \, a_1 \, b^* \, (c+c')^* \, d^* \, (b+e)^* \, c^* \, a_2\, e^*
\:,
\\
\shortintertext{we associate}
\tag{$\dagger$}
\label{eq-P'_0-1}
P'_0 &= d^* \, a_1 \, b^* \, (b+c+c')^* \, (b+c+d)^* \, (b+e+c)^* \, c^* \, a_2 \, e^*
\\
\tag{$\ddagger$}
\label{eq-P'_0-2}
& \equiv d^* \, a_1 \, (b+c+c')^* \, (b+c+d)^* \, (b+e+c)^* \, a_2 \, e^*
\:.
\end{align}
Since $B_{i,j}\subseteq B'_{i,j}$ for all $i,j$, one has $P\subseteq
P'$. However $P'$ only enlarges $P$ in the following safe way:
\begin{lemma}
\label{lem-P-P'}
For any $u\in A^*$ and $n\in\Nat$, $P\subseteq [u]_n$ implies
$P'\subseteq [u]_n$.
\end{lemma}
\begin{proof}
A word in $P'$ may use letters from some $B'_{i,j}$'s that are not in
the corresponding $B_{i,j}$ and, with any $w\in P'$, we associate
$\#_w$, the smallest number of letters that must be removed from $w$
before the resulting subword belongs to $P$.

We now assume $P\subseteq [u]_n$ and prove $w\in [u]_n$ for all $w\in
P'$ by induction on $\#_w$. The base case where $\#_w=0$ is trivial
because then $w\in P$. So assume that $\#_w>0$, i.e., $w\not\in
P$. Then there must exist in $w$ an occurrence of some letter $b$,
from some $B'_{i,j}$, that does not appear in the corresponding
$B_{i,j}$. (Note that, by Eq.~\eqref{eq-def-B'ij}, there must exist
some $r<j$ and some $s>j$ such that $b\in B_{i,r}\cap B_{i,s}$.)
Accordingly, we factor $w$ under the form
\[
w = w_1 \, \underline{a_i}  \, \beta_1 \,  \underline{b} \,
\beta_2 \, \underline{a_{i+1}} \, w_2
\:,
\]
highlighting the selected $b$, the occurrences of $a_i$ and $a_{i+1}$
that surround it, and where $\beta_1$ and $\beta_2$ belong to
$B_{i,1}^{\prime *}\cdots B_{i,j}^{\prime *}$ and $B_{i,j}^{\prime
*}\cdots B_{i,p_i}^{\prime *}$, respectively.  Let
$w'=w_1\,a_i\,\beta_1\,\beta_2\,a_{i+1}\,w_2$, i.e., $w'$ is $w$
without the $b\in B'_{i,j}$ that we singled out, so that
$\#_{w'}<\#_{w}$ and the induction hypothesis yields $w'\in [u]_n$.
We now claim that $w\sim_n w'$.  To prove this, and since $w'\subword
w$, it is enough to show that any subword $t\in A^{\leq n}$ of $w$ is
also a subword of $w'$. So consider one such $t$.  From $t\subword w$
we extract a factorization $t=t_1\,t_2\, t_3$ such that
\begin{xalignat}{3}
\label{eq-decomp-t123}
t_1 &\subword w_1 \, a_i  \, \beta_1
\:,
&
t_2&\subword b
\:,
&
t_3
&\subword \beta_2 \, a_{i+1} \, w_2
\:.
\end{xalignat}
Thus $t_1t_3\subword w'$ and, since $w'\in[u]_n$ and $|t_1t_3|\leq n$,
$t_1t_3\subword x$ for any $x\in [u]_n$.  In particular,
$t_1t_3\subword a_1 a_2 \cdots a_\ell\in P$, which requires
$t_1\subword a_1 a_2\cdots a_i$ or $t_3\subword a_{i+1} a_{i+2}\cdots
a_\ell$ (or both).  Let us assume $t_1\subword a_1 a_2 \cdots a_i$,
the other case being similar.  Combining with Eq.~\eqref{eq-decomp-t123},
we obtain
\[
t= t_1t_2t_3 \subword (w''\egdef)\: a_1 a_2 \cdots a_i \cdot b \cdot \beta_2 a_{i+1} w_2
\:.
\]
Note that $w''\in P'$ and that $\#_{w''}<\#_{w}$ since the $b$ that
follows $a_i$ in $w''$ can be accounted for by $B_{i,r}^*$.  Thus
$w''\in [u]_n$ by induction hypothesis, i.e.\ $w''\sim_n w'$, from
which we deduce $t\subword w'$.
\end{proof}

We can now bound the number of star factors in the D-product $P'$
associated with $P$.  For this, we first simplify
$P'=\prod_{i=0}^\ell\bigl(a_i\prod_{j=1}^{p_i}B_{i,j}^{\prime
*}\bigr)$ by removing any $B_{i,j}^{\prime *}$ star factor that is
subsumed by its immediate neighbour, i.e., such that
$B'_{i,j}\subseteq B'_{i,j-1}$ or $B'_{i,j}\subseteq B'_{i,j+1}$. This
is exactly how we moved from \eqref{eq-P'_0-1} to \eqref{eq-P'_0-2} in
our earlier example, and it shortens $P'$ without changing the denoted
language. Once no more simplifications are possible, we can bound the
length of the resulting D-product with the following combinatorial
observation:
\begin{lemma}
\label{lem-bound-size-P'}
Assume that $A_{1},\ldots,A_{p}\subseteq A$ are $p$ subalphabets such
that
\\
--- for all $1 \leq j < p$,  $A_{j}\not\subseteq A_{j+1}$  and $A_{j+1}\not\subseteq A_j$ ;
\\
--- for all $b \in A$ and $1 \leq j <k < j' \leq p$, if $b \in
  A_{j}\cap A_{j'}$, then $b \in A_{k}$.
\\
Then $p \leq |A|$.
\end{lemma}
\begin{proof}
Note that by the first condition, each $A_{j}$ is nonempty. Extend
the sequence by defining $A_{0} = A_{p+1} = \emptyset$. For $0
\leq j \leq p$, define $\Delta_j = A_{j} \bigtriangleup
A_{j+1}$, where $\bigtriangleup$ denotes symmetric difference. Now
$\Delta_0$ and $\Delta_{p}$ have size at least $1$, and by the first
condition, every other $\Delta_j$ has size at least $2$. Thus
$\sum_{j=0}^{p} |\Delta_j| \geq 2 p$. By the second condition, any
$b \in A$ occurs in at most two $\Delta_j$'s, thus $\sum_{j=0}^{p}
|\Delta_j| \leq 2|A|$. So we conclude $2 p \leq 2|A|$.
\end{proof}
\begin{corollary}
\label{coro-second-Pu}
For every $u\in A^*$ there is a D-product $P'_u$ of length at most $k
m + m + k$, and such that $u\in P'_u\subseteq [u]_n$.
\end{corollary}
\begin{proof}
$P'_u$ as constructed above (and after simplifications) has at most
$m$ letter factors, that separate at most $m+1$ blocks of star
factors. Each such block is based on subalphabets
$B'_{i,1},\ldots,B'_{i,p_i}$ that satisfy the assumptions of
\Cref{lem-bound-size-P'}: condition 1 holds since otherwise more
simplifications could be performed, while condition 2 is a consequence
of the definition of the $B'_{i,j}$'s via Eq.~\eqref{eq-def-B'ij}.
Consequently, any star factor block in $P'_u$ has length at most
$|A|=k$, leading to a $km+m+k$ bound for the total length of $P'_u$.
Finally, that $u\in P'_u\subseteq[u]_n$ is a consequence of
Lemma~\ref{lem-P-P'} since $u\in P_u\subseteq [u]_n$.
\end{proof}

We may now conclude the proof of \Cref{thm-ptl-for-downL}.  Indeed,
with \Cref{coro-second-Pu}, and since $L$ is closed under $\sim_n$, we
obtain $L= \bigcup_{u\in L}P'_u$, where each $P'_u$ has length bounded
by $k m+k+m$.  We then apply \Cref{coro-ptl-of-sre}.

%% Local Variables:
%% ispell-check-comments: nil
%% ispell-local-dictionary: "british"
%% fill-column: 70
%% End:

%%  LocalWords:  CFG QNF Bachmeier Eq hyp

%%% Local Variables:
%%% mode: latex
%%% TeX-master: "lmcs.tex"
%%% End:

\section{Piecewise-testability and PT height for $I(L)$}
%======================================================
\label{sec-PT-IL}

Recall that $I(L)$ is the set of words which are incomparable (via
$\subword$ or $\supword$) with \emph{some word} in $L$.  I.e., it is
the image of $L$ by the incomparability relation $\perp$, or
equivalently its pre-image since $\perp$ is symmetric.

In this section we prove the following result.
\begin{theorem}
\label{thm-I-of-PT-is-PT}
Suppose $L\subseteq A^*$ is PT and let $k=|A|$ and
$m=f_k(h(L))$.  Then $I(L)$ is PT and
\begin{equation*}
h(I(L)) \leq m+1
\:.
\end{equation*}
\end{theorem}
We saw  that $\perp_{A^*}$ is a rational
relation, so that  $I(L)$ is regular
when $L$ is regular, see  Proposition~\ref{perp-rational}.
Showing that $I$ also preserves piecewise-testability requires more
work. For such questions, $I$ does not behave as simply as the
pre-images we considered in earlier sections.  In particular it does
not necessarily yield languages that are PT, unlike $\up L$ or $\down
L$.

At this point it is useful to examine some examples and make some
general observations. Let $A=\{a,b,c\}$ and define the language $L_1$
of all finite prefixes of $(abc)^{\omega}$ via
\[
L_1 =
(abc)^*(\epsilon+a+ab) = \{\epsilon,a,ab,abc,abca,abcab,\ldots\}
\:.
\]
Note that $L_1$ is totally ordered by $\subword$ hence no word of
$L_1$ is in $I(L_1)$, i.e., $I(L_1)\subseteq A^*\setminus L_1$.

To prove the reverse inclusion, we rely on the fact that a word is
incomparable with any other word having same length.  I.e.,
$I(u)\supseteq A^{=|u|}\setminus \{u\}$ for any $u$, and thus, for any
language $L$,
\begin{equation}
\label{eq-prop-I}
I(L) = \bigcup_{u\in L}I(u)
\supseteq
\bigcup_{u\in L} \bigl(A^{=|u|}\setminus \{u\}\bigr)
\:.
\end{equation}
Since $L_1$ above contains at least one word of any given length,
Eq.~\eqref{eq-prop-I} entails $I(L_1)\supseteq A^*\setminus
L_1$. Finally we have proved that $I(L_1)=A^*\setminus L_1$.  Thus
$I(L_1)$ is not PT since $L_1$ is not.  \\

A similar example shows that $I(L)$ is not necessarily regular when $L$ is not.
For example, take $A = \{a,b\}$ and let
\[
L_2=\{a^\ell b^\ell(\epsilon+b)~|~\ell\in\Nat\}=\{\epsilon,b,ab,abb,aabb,a^2b^3,
a^3b^3,\ldots\}\:.
\]
Here too $L_2$ is totally ordered by $\subword$ and contains one word
of each length. Hence $I(L_2)=A^*\setminus L_2$, which is not regular.
\\

Let us now consider some PT languages. In the case of a singleton
language $L = \{w\}$, we know from Eq.~\eqref{eq-ptl-up-I-singleton}
that $h(I(w))=|w|$ when $|A|\geq 2$, and $h(I(w))=0$ when
$|A|<2$. This can be used to bound $h(I(F))$ for a finite language
$F$, using $I(F)=\bigcup_{w\in F} I(w)$.
\\

Consider now $L_3=[aab]_2$, i.e., $L_3=aaa^*b$. This language is
infinite but it is totally ordered by $\subword$ and has one word of
each length $\ell\geq 3$. Hence $I(L_3)=A^*\setminus L_3
\setminus\down aab$ and we can easily bound $h(I(L_3))$ using results
from the previous sections.
\\

Another infinite PT language is $L_4=[aabb]_2$, i.e.,
$L_4=aaa^*bbb^*$. A different strategy applies here: $L_4$ contains
no words of length $\ell<4$ and exactly $\ell-3$
words of each length $\ell\geq 4$.  We may invoke a consequence of
Eq.~\eqref{eq-prop-I}: if a language $L$ contains at least two words
having same length $\ell$ then $I(L)$ contains all words of length
$\ell$.  Applied to $L_4$, this entails $I(L_4)\supseteq A^{\geq 5}$,
which is enough to conclude that $I(L_4)$ is $A^{\geq 5}\cup F$ for
some finite $F\subseteq A^{\leq 4}$, entailing $h(I(L_4))\leq 5$.
\\

As the above examples suggest, it is useful to think of the ``layers'' $L\cap
A^{=\ell}=\{w\in L : |w|=\ell\}$ of $L$, and classify them into
\emph{empty}, \emph{singular}, or \emph{populous} layers, depending on
whether they contain 0, 1, or more words. Observe that if $L\cap
A^{=\ell}$ is populous then $I(L)\cap A^{=\ell}$ equals $A^{=\ell}$.

It is also useful to decompose PT-languages into the equivalence
classes that make them up.  Therefore, in the rest of this section, we
focus on some equivalence class $[w]_n\subseteq A^*$ where $n=h(L)$.
\\

A first observation is that the populous layers of $[w]_n$ propagate
upwards:
\begin{lemma}
\label{lem-two-upclosed}
Let $p\in\Nat$. If $[w]_n\cap A^{=p}$ is populous, then $[w]_n\cap A^{=p+1}$ is populous too.
\end{lemma}
\begin{proof}
Assume that $[w]_n$ contains two different words $u_1,u_2$ of length
$p$. Then $p>0$ and these words can be written under the form
$u_1=u_0av_1$ and $u_2=u_0bv_2$ where $u_0$ is their longest common
prefix and $a,b$ are two distinct letters occurring at the first
position where $u_1$ and $u_2$ differ.  Applying
Lemma~\ref{lem-useful}~(\ref{it-diff-let}) we deduce that $[w]_n$
contains either $u_0abv_2$ or $u_0bav_1$.  Let us assume, w.l.o.g.,
that $u_0bav_1\sim_n w$ since the other case is similar.  We now claim
that $u_0bbv_2\sim_n w$. Since $w\sim_n u_2\subword u_0bbv_2$, it is
enough to show that every subword $s$ of $u_0bbv_2$ of length at most
$n$ is also a subword of $u_2$. So let us pick any such $s$ and factor
it as $s=s_0 s_b s_2$ with $s_0\subword u_0$, $s_b\subword bb$, and
$s_2\subword v_2$, and with furthermore $s_0$ chosen longest possible.
If $s_b\leq b$ then $s\subword u_0b v_2=u_2$ and we are done, so we
assume $s_b=bb$. Let $s'=s_0bs_2$ and note that $s' \subword u_2$,
hence $s'\subword u_1$ since $u_1\sim_n u_2$ and $|s'|<|s|\leq n$.
Now $s'=s_0bs_2\subword u_0av_1=u_1$ requires $bs_2\subword a v_1$
since our choice of $s_0$ longest entails $s_0b\not\subword u_0$. This
gives $s=s_0bbs_2\subword u_0bav_1$, hence $s\subword u_2$ since
$u_0bav_1\sim_n u_2$.

We  have shown that $[w]_n$ contains $u_0bav_1$ and $u_0bbv_2$, both
words having length $p+1$.
\end{proof}

Populous layers also propagate downwards in the following sense:
\begin{lemma}
\label{lem-two-downclosed}
Let $p\geq 2$. If $[w]_n\cap A^{=p}$ is populous then $[w]_n\cap
A^{=p-1}$ is populous or $[w]_n\cap A^{=p-2}$ is empty (or both).
\end{lemma}
\begin{proof}
Assume that layer $p$ is populous and that layer $p-2$ is non
empty. If layer $p-2$ is populous, then layer $p-1$ is populous by
\Cref{lem-two-upclosed} and we are done. So assume that $[w]_n\cap
A^{=p-2}=\{x\}$ is singular.  Then
Lemma~\ref{lem-useful}~(\ref{it-shorter}) entails that $x\subword w'$
for all $w'\in [w]_n\cap A^{\geq p-1}$, and since layer $p$ is not
empty, Lemma~\ref{lem-useful}~(\ref{it-convex}) entails that layer $p-1$
too is not empty.  Pick $y\in [w]_n\cap A^{=p-1}$ and factor it as
$y=uau'$ such that $x=uu'$.  This entails $z=uaau'\in [w]_n\cap
A^{=p}$ by Lemma~\ref{lem-useful}~(\ref{it-pumping}). By assumption,
$[w]_n\cap A^{=p}$ is populous, hence contains another word $z'\not=
z$ and there is a word $y'\in [w]_n\cap A^{=p-1}$ with $x\subword
y'\subword z'$. If $y'\neq y$ we have proved that layer $p-1$ is
populous and we are done. Otherwise $y\subword z'$ and we may consider
the different possibilities for $y=uau'\subword z'\neq z$, knowing
that $|z'|=|y|+1$.  If $z'$ is either some $uabu'$ or $ubau'$ with
$b\neq a$ then we let $y''=ubu'$. From $x=uu'\subword y''\subword z'$
we deduce $y''\in [w]_n$ by Lemma~\ref{lem-useful}~(\ref{it-convex}).
If $z'$ is $vau'$ with $u\strictsubword v$ we let $y''=vu'$ and again
deduce $y''\in [w]_n$ from $x\subword y''\subword z'$. (Note that
$y''\neq y$ since $vau'=z'\neq z=uaau'$.)  If $z'$ is $uav'$ with
$u'\strictsubword v'$, the same reasoning applies to $y''=uv'$. In all
three cases $y''\neq y$ and $|y''|=|y|=p-1$, showing that $[w]_n\cap
A^{=p-1}$ is populous.
\end{proof}

With Lemmas~\ref{lem-two-upclosed} and~\ref{lem-two-downclosed}, we
see that almost all layers of $[w]_n$ are populous as soon as one
is. More precisely, when one layer is populous, either all (nonempty)
layers are populous, or all are except for the lowest (nonempty) one.
We already saw an example of the second situation with $L_4 = [aabb]_2
= aaa^*bbb^*$, and an example of the first situation is
$L_5=[abcab]_2$, where the lowest nonempty layer is $L_5\cap
A^{=5}=\{abcab,abcba,bacab,bacba\}$.
\\

We now consider the general case:
\begin{lemma}
\label{lem-I-of-class-is-PT}
$h(I([w]_n))\leq m+1$.
\end{lemma}
\begin{proof}
Recall that $[w]_n$ is a singleton or is infinite
(Lemma~\ref{lem-useful}~(\ref{it-sing-inf})).  We consider two cases.
\begin{enumerate}
\item Assume that $[w]_n=\{w\}$ is a singleton. Then $h(I(w))\leq |w|$
  by Eq.~\eqref{eq-ptl-up-I-singleton} and $|w| \leq m$ by the
  small-subword theorem. Hence $h(I([w]_n))\leq m$.

\item Assume that $[w]_n$ is infinite.  Let $u$ be a shortest word in $[w]_n$ and write $p$ for
  its length $|u|$. By the small-subword theorem, $p \leq m$. Since
  $[w]_n$ is infinite, and by Lemma~\ref{lem-useful}~(\ref{it-convex}),
  it contains at least one word of each length $\geq p$, hence
  $I([w]_n)\supseteq A^*\setminus[w]_n\setminus \down_< u$ by Eq.~\eqref{eq-prop-I}. There are
  two subcases.
\begin{enumerate}
\item
If $[w]_n$ is a total order under $\subword$, no layer is populous,
hence $I([w]_n)= A^*\setminus[w]_n\setminus \down_< u$.  Since
$h(\down_< u)=|u|\leq m$ and $h([w]_n)=n\geq m$, we obtain
$h(I([w_n]))\leq m$.

\item If $[w]_n$ is not a total order under $\subword$, all layers
  above $p$ are populous by Lemmas~\ref{lem-two-upclosed} and
  \ref{lem-two-downclosed}. If $u$ is the unique shortest word in
  $[w]_n$, we have $I([w]_n)= A^*\setminus\down u$, entailing
  $h(I([w]_n))\leq |u|+1\leq m+1$.  Otherwise layer $p$ is populous
  too, entailing $A^{\geq p}\subseteq I([w]_n)$, i.e.,
  $I([w]_n)=A^{\geq p}\cup F$ for some finite $F\subseteq A^{<p}$.  We
  deduce $h(I([w]_n))\leq p\leq m$ by Eq.~\eqref{eq-h-finite}.
\qedhere
\end{enumerate}
\end{enumerate}
\end{proof}

We may now conclude:
\begin{proof}[Proof of \Cref{thm-I-of-PT-is-PT}]
Being $n$-PT, $L$ is a finite union $[w_1]_n \cup \cdots \cup
[w_\ell]_n$ of equivalence classes of $\sim_n$, so that
$I(L)=I([w_1]_n)\cup \cdots\cup I([w_\ell]_n)$.  Now each $I([w_i]_n)$
is $(m+1)$-PT by Lemma~\ref{lem-I-of-class-is-PT} so that $I(L)$ is
too.
\end{proof}

\begin{remark}
The upper bound in \Cref{thm-I-of-PT-is-PT} is quite good: for any
$k,\lambda\geq 1$, the language $L=\{U_k\}$ from \Cref{ssec-uk}
has $h(U_k)=n=k\lambda+1$ so that \Cref{thm-I-of-PT-is-PT} gives
$h(I(U_k))\leq (\lambda+2)^k$. On the other hand we know
by Eq.~\eqref{eq-ptl-up-I-singleton} that
$h(I(U_k))=|U_k|= (\lambda+1)^k-1$ when $k>1$.
\end{remark}

%% Local Variables:
%% ispell-check-comments: nil
%% ispell-local-dictionary: "british"
%% fill-column: 70
%% End:

%%  LocalWords:  pre abca abcab abb aabb aabbb aaabbb uv uav ua ubv
%%  LocalWords:  Eq abc

%%% Local Variables:
%%% mode: latex
%%% TeX-master: "lmcs.tex"
%%% End:

\section{Deciding the two-variable logic of subwords}
%===================================================
\label{sec-FO-basic}

In this section we use our results on PT heights to establish
complexity bounds on a decidable fragment of $\FO(A^*,\subword)$, the
first-order logic of subwords.

We assume familiarity with basic notions of first-order logic as
exposed in, e.g.,~\cite{harrison2009}: bound and free occurrences of
variables, quantifier depth of formulae, and fragments $\FO^n$ where
at most $n$ different variables (free or bound) are used.  In
particular, if $\phi(x_1,\ldots,x_n)$ has $n$ free variables, we write
$R_\phi$ for the $n$-ary relation defined by $\phi$ on the underlying
structure.

The signature of the $\FO(A^*,\subword)$ logic only contains one
predicate symbol, ``$\subword$'', denoting the subword relation.
Terms are variables taken from a countable set $X=\{x,y,z,\ldots\}$
and all words $w_1,w_2,\ldots\in A^*$ as constant symbols (denoting themselves).
For example, with $A=\{a,b,c,\ldots\}$, $\exists x \bigl(ab\subword
x\land bc\subword x\land \neg(abc\subword x)\bigr)$ is a true sentence
as witnessed by $x\mapsto bcab$.
\\

The logic of the subword relation is a logic of substructure ordering
like those considered by Je{\v{z}}ek and McKenzie
(see~\cite{jezek2009i} and subsequent papers). It is one of the
simplest and most natural substructure ordering occurring in computer
science~\cite{kuske2006}. In its full generality, this logic is
computably isomorphic with $\FO(\Nat,+,\times)$, hence
undecidable~\cite{kudinov2010}.  We showed that already the $\Sigma_2$
fragment is undecidable~\cite{KS-fosubw} and recently Halfon \etal
showed that even the $\Sigma_1$ fragment is
undecidable~\cite{HSZ-lics2017}. This was very surprising: by comparison,
``words equations'', i.e., the $\Sigma_1$ fragment of
$\FO(A^*,\cdot,=)$ in which the prefix relation can be defined, are
decidable in $\PSPACE$~\cite{diekert2002,plandowski2004,jez2016}.  \\

We have previously shown that  $\FO^2(A^*,\subword)$, the 2-variable fragment, is decidable by a
quantifier elimination technique~\cite{KS-fosubw}. In this article
we extend our earlier analysis of the expressive power and complexity
of the $\FO^2$.

When performing quantifier elimination, it is
convenient to enrich the basic logic by allowing all regular languages
$L_1,L_2,\ldots\in\Reg(A^*)$ as monadic predicates with the expected
semantics, and we shall temporarily adopt this extension.  We write
$x\in L$ rather than $L(x)$ and assume that $L$ is given via a regular
expression or a finite automaton --- For example, we can state that
$(a+b)^*$ is the downward closure of $(ab)^*$ with $\forall
x\bigl[x\in (a+b)^*\iff \exists y(y\in (ab)^*\land x\subword y)
\bigr]$.

\subsection{Subword-recognizable relations}
%------------------------------------------
In order to characterise the $\FO^2$-definable relations, we need some
definitions.  A relation $R\subseteq A^*\times A^*$ is
\emph{subword-recognizable}, if it belongs to the boolean closure of
$\Rec(A^*\times A^*)\cup\{\subA,\supA\}$. It is
furthermore \emph{subword-piecewise-testable}, if it belongs to the
boolean closure of $\PT(A^*\times
A^*)\cup\{\subA,\supA\}$. We write
$\Rec_{\subword}(A^*\times A^*)$ and $\PT_{\subword}(A^*\times A^*)$
for the corresponding classes.

\begin{proposition}[Normal form for $\Rec_{\subword}(A^*\times A^*)$ and $\PT_{\subword}(A^*\times A^*)$]
A relation $R\subseteq A^*\times A^*$ is subword-recognizable if, and
only if, it can be written under the form
\begin{equation}
\label{eq-NF}
\tag{NF}
R \;=\; (\ssubA\cap R_1)
\cup (\idA\cap R_2)
\cup (\ssupA\cap R_3)
\cup (\perpA\cap R_4)
\end{equation}
for some recognizable relations $R_1,R_2,R_3,R_4$.

Furthermore, $R$ is subword-piecewise-testable if, and only if,
the relations $R_1,R_2,R_3,R_4$ can be chosen among the
piecewise-testable relations.
\end{proposition}
\begin{proof}
That the normal forms are subword-recognizable is clear since
$\idA$ and $\ssubA$ belong to the boolean closure of
$\{\subA,\supA\}$: they are
$\subA\cap\supA$ and
$\subA\setminus\idA$.

Showing the other direction, i.e., that a subword-recognizable or
subword-piecewise-testable relation $R$ can be put in normal forms,
is done by induction on the boolean combination realising $R$ from the
generators of the boolean closure. Here it suffices to show that normal forms are closed under
intersections and complementations.  Let us write
$\{\xi_1,\xi_2,\xi_3,\xi_4\}$ for $\{\ssubA, \idA, \ssupA, \perpA\}$
and $\bigcup_{i=1}^4\xi_i\cap R_i$ for normal forms.  Since the
$\xi_i$'s are pairwise disjoint, we have
\begin{gather}
\label{eq-union-subnpt}
\Bigl(\bigcup_{i=1}^4\xi_i\cap R_i\Bigr)
\cap
\Bigl(\bigcup_{i=1}^4\xi_i\cap R'_i\Bigr)
\;=\;
\bigcup_{i=1}^4\xi_i\cap (R_i\cap R'_i)
\:.
\end{gather}
Since the $\xi_i$'s  form a partition of $A^*\times A^*$, we
further have
\begin{gather}
\label{eq-complement-subnpt}
(A^*\times A^*)
\setminus (\xi_i\cap R)
\;=\;
\Bigl(\xi_i\cap\bigl[(A^*\times A^*)\setminus R\bigr]\Bigr)
\cup
\bigcup_{j\neq i}\xi_j\cap(A^*\times A^*)
\:.
\end{gather}
Thus we see that normal forms are closed under
boolean operations since $\Rec(A^*\times A^*)$ is.

Finally, the same proof applies to $\PT_{\subword}(A^*\times A^*)$.
\end{proof}

\subsection{Quantifier elimination for $\FO^2(A^*,\subword,L_1,L_2,\ldots)$}
%-----------------------------------------------------------------------

We may now characterise the $\FO^2$-definable relations.
\begin{theorem}
\label{thm-fo2-rat}
(i) A relation $R\subseteq A^*\times A^*$ is definable in the extended
logic $\FO^2(A^*,{\subword,}$\linebreak $L_1,L_2,\ldots)$ iff it is subword-recognizable.

(ii) It is definable in the basic logic $\FO^2(A^*,\subword,w_1,w_2,\ldots)$
iff it is subword-piecewise-testable.

(iii) Furthermore, a normal form for $R_\phi$ can be computed from the
$\FO^2$ formula $\phi(x,y)$,
\end{theorem}
\begin{proof}
The $(\Leftarrow)$ direction of \textit{(i)} is obvious: for example
$R=\ssubA\cap L\times L'$ is definable via $x\subword y\land x\not=y
\land x\in L\land y\in L'$, a $\FO^2$ formula. When proving the
same $(\Leftarrow)$ direction for \textit{(ii)}, we cannot use
regular predicates to express $x\in L$ or $y\in L$. But since we
assume that $L$ is PT, it is a boolean combination of filters $\up u$,
$\up u'$, etc., so $x\in L$ can be expressed as a boolean
combination of atomic formulas $u\subword x$, $u'\subword x$, etc.

We prove the $(\Rightarrow)$ direction for \textit{(i-ii)} by structural induction on the
$\FO^2$-formula $\phi(x,y)$ that defines $R_\phi$.  We consider all
cases:

If $\phi$ is an atomic formula of the form $x\in L$ or $y\in L'$, then
$R_\phi$ is $L\times A^*$ or $A^*\times L'$. If $\phi$ is some $u\in
L$, then $R_\phi$ is one of the trivial $\emptyset$ or $A^*\times
A^*$.

If $\phi$ is an atomic formula of the form $x\subword u$ or $v\subword
y$ for some constant word $u$ or $v$ then $R_\phi$ is $(\down u)\times
A^*$ or $A^*\times (\up v)$ respectively (a piecewise-testable
relation in each case). If $\phi$ is $x\subword y$ then $R_\phi$ is
$\subA$, and $R_\phi$ is trivial if $\phi$ is some $u\subword v$.

If $\phi$ is a conjunction $\phi_1\land\phi_2$ or a negation
$\neg\phi_1$, we rely on the induction hypothesis and the closure
properties of $\Rec_{\subword}(A^*\times A^*)$
and  $\PT_{\subword}(A^*\times A^*)$.

The remaining case is when $\phi(x,y)$ is some $\exists x.\psi(x,y)$,
the case $\exists y.\psi(x,y)$ being identical.  Here $R_\phi =
A^*\times \pi_2(R_\psi)$ where $\pi_2:A^*\times A^*\to A^*$ is the
projection $(u,v)\mapsto v$ lifted from pairs to sets of pairs, i.e.,
relations.  To compute $\pi_2(R_\psi)$ we write $R_\psi$ in normal
form ---thanks to the induction hypothesis--- and use $\pi_2
\bigl(\bigcup_{i=1}^4\xi_i\cap R_i\bigr) = \bigcup_{i=1}^4
\pi_2\bigl(\xi_i\cap R_i\bigr)$ and the following equalities (proofs
omitted):
\begin{xalignat}{2}
\label{eq-elim-=-<}
\pi_2(=_{A*}\cap L\times L') &= L\cap L'
\:,
&
\pi_2(\ssubword_{A*}\cap L\times L') &= (\up_< L) \cap L'
\:,
\\
\label{eq-elim-perp->}
\pi_2(\perp_{A*}\cap L\times L') &= I(L) \cap L'
\:,
&
\pi_2(\ssupword_{A*}\cap L\times L') &= (\down_< L) \cap L'
\:.
\end{xalignat}
Finally, in the case where $\phi$ does not use regular predicates, the
above inductive construction only produces subword-piecewise-testable
relations.

We now see why \textit{(iii)} holds: the operations used above can all
computed effectively on relations in normal form, e.g., using a
quadruplet of automata for the $R_1,R_2,R_3,R_4$ of \eqref{eq-NF}.
The required operations on automata are classic constructions: boolean
combinations and images of regular languages by $\ssupword_A$,
$\ssubword_A$ and $\perp_A$ (all rational relations).
\end{proof}
We note that if $\phi(x)$ is a $\FO^2$ formula with a single free
variable, $R_\phi$ can be put under the form $L\times A^*$, i.e.,
$\phi(x)$ defines a regular property of words, and a
piecewise-testable one if $\phi(x)$ is in the basic logic.
\\

\Cref{thm-fo2-rat} has several corollaries.
Firstly, and since the
normal forms can be effectively computed, we have
\begin{corollary}[Decidability~\cite{KS-fosubw}]
Validity and satisfiability are decidable for
$\FO^2(A^*,\subword,\linebreak L_1,L_2,\ldots)$.
\end{corollary}
By contrast, note that the $\FO^3\cap\Sigma_2$ and the $\Sigma_1$
fragments of the basic logic are
undecidable~\cite{KS-fosubw,HSZ-lics2017}.

Secondly, the computations inside the proof of \Cref{thm-fo2-rat} can
be seen as a quantifier-elimination procedure.
\begin{corollary}
Any $\FO^2(A^*,\subword,L_1,L_2,\ldots)$ formula is effectively equivalent to a
quantifier-free $\FO^2$ formula. The same holds for the basic logic
$\FO^2(A^*,\subword,w_1,w_2,\ldots)$.
\end{corollary}

\subsection{Complexity for $\FO^2(A^*,\subword)$}
%------------------------------------------------
The algorithm underlying the proof of \Cref{thm-fo2-rat}~\textit{(iii)} can be
implemented using finite-state automata to handle and compute
subword-recognizable relations via their normal forms.
The steps described in
Eqs.~(\ref{eq-elim-=-<}--\ref{eq-elim-perp->}) involve computing
images of regular languages by (fixed) rational relations and may
induce an exponential complexity blow-up. In particular the
pre-images $\up_< L$ and $\down_< L$ can have exponential size if one
uses deterministic  or alternating automata, while if one uses
nondeterministic
(or unambiguous) automata,
the dual pre-images $\neg(\up_<(\neg L))$ and
$\neg(\down_<(\neg L))$ ---used in eliminating a universal quantifier--- can have doubly exponential size~\cite{KNS-tcs2016}.  Therefore
the best known upper bound for the decidability of
$\FO^2(A^*,\subword,L_1,L_2,\ldots)$ is a tower of exponentials with
height bounded by the nesting depth of the formula at hand, hence
a nonelementary complexity.  Regarding lower bounds, only
$\PSPACE$-hardness has been established~\cite{KS-fosubw}.  \\

We now turn to the basic logic, $\FO^2(A^*,\subword,w_1,w_2,\ldots)$ where
regular predicates are not allowed.  As stated in \Cref{thm-fo2-rat},
the quantifier-elimination procedure will only produce
subword-piecewise-testable relations and languages.  Furthermore, it
is possible to bound the PT height of the defined languages and deduce
an elementary complexity upper bound.

\begin{theorem}[$\FO^2(A^*,\subword,w_1,w_2,\ldots)$ has elementary complexity]
\label{maintheorem}
Assume that $\phi$ is a $\FO^2(A^*,\subword,w_1,w_2,\ldots)$ formula.
Then $h(R_\phi)$ is in $2^{2^{O(|\phi|)}}$.  \\
Furthermore, computing  DFAs for the normal form of $R_\phi$ (hence deciding the
satisfiability or the validity of $\phi$) can be
done in $\EXPTIMEiii$.
\end{theorem}
\begin{proof}
The quantifier-elimination procedure that proves
\Cref{thm-fo2-rat}~\textit{(ii)} builds, for any subformula $\psi$ of $\phi$, a
relation of the form $\bigcup_{i=1}^4 \xi_i\cap R_i$
represented by a quadruplet $R_1,R_2,R_3,R_4$ of
PT relations. The PT height of these relations can be bounded. For
example, the PT height is given by $|u|$ for $\psi$ an atomic formula
of the form $u\subword x$.  Boolean combinations do not increase
PT height even in the case of subword-piecewise-testable relations,
see Eqs.\ \eqref{eq-union-subnpt} and \eqref{eq-complement-subnpt}.
Quantifier-elimination can increase PT height when we
compute $\up_< L$, $\down_< L$ and $I(L)$ as prescribed by
Eqs.\ \eqref{eq-elim-=-<} and \eqref{eq-elim-perp->}. But
Theorems~\ref{thm-ptl-for-upL}, \ref{thm-ptl-for-downL} and
\ref{thm-I-of-PT-is-PT} apply and show that the increase is
polynomially bounded. Such increases combined at most $|\phi|$ times
give a PT height bounded in  $2^{2^{O(|\phi|)}}$.

Finally, when the PT height of $R_\phi$ (and of all intermediary
$R_\psi$) have been bounded in $2^{2^{O(|\phi|)}}$, we obtain a bound
on the size of the minimal DFAs and the time and space needed to
compute them using \Cref{thm-size-DFA-for-nPT}.
\end{proof}

%% Local Variables:
%% ispell-check-comments: nil
%% ispell-local-dictionary: "british"
%% fill-column: 70
%%% TeX-master: "lmcs.tex"
%% End:

\section{Concluding remarks}
%===========================
\label{sec-concl}

We developed several new techniques for proving upper and lower bounds
on the PT height of languages constructed by closing w.r.t.\ the
subword ordering or its inverse. We also considered related
constructions like taking minimal elements, or taking the image by the
incomparability relation.
In general, the PT height of upward closures is bounded with the length of
minimal words.
For downward closures, we developed techniques for expressing them
with D-products and bounding their lengths.
We illustrated these techniques with regular and context-free
languages but more classes can be considered~\cite{zetzsche2015b}.
More importantly, the closures of PT languages have PT height bounded
polynomially in terms of the PT height of the argument.
Our main tool here is the small-subword theorem
that provides tight lower bounds on the PT height of finite
languages, with {ad hoc} developments for $I(L)$.

These results are used to bound the complexity of the two-variable
logic of subwords but we believe that the PT hierarchy can be used
more generally as an effective measure of descriptive complexity. (The
same can be said of the hierarchies of locally-testable languages, or
of dot-depth-one languages).

This research program raises many interesting
questions, such as connecting PT height and other measures, narrowing
the gaps remaining in our Theorems~\ref{thm-ptl-for-upL},
\ref{thm-ptl-for-downL}, and \ref{thm-I-of-PT-is-PT}, and enriching the known
collection of PT preserving operations.

%% These questions will probably require new insights in PT languages.
%% For example, we can prove that $L\shuffle u$ is PT when $L$ is PT and
%% $u$ is a word, but we do not have good bounds on $h(L\shuffle u)$ as a
%% function of $h(L)$ (and $u$).  We mention this question since it
%% appears when, following a suggestion by D.~Kuske, one extends the
%% decidability of $\FO^2(A^*,\subword)$ to
%% $\FO^2(A^*,\subword,\subword^1)$, where $u\subword^1 v$ means
%% $u\subword v\land |v|=|u|+1$.  Here one will need to compute the image
%% of $L$ by $\subword^1$, which is exactly $L\shuffle A$.

\section*{Acknowledgments}
%=========================
We thank the anonymous reviewers for their many helpful suggestions.

%% Local Variables:
%% ispell-check-comments: nil
%% ispell-local-dictionary: "british"
%% fill-column: 70
%% End:

%%  LocalWords:  pre hoc

%%% Local Variables:
%%% mode: latex
%%% TeX-master: "lmcs.tex"
%%% End:

\appendix
\section{Bounding $f_k(n)$}
%==========================

From \Cref{sec-PT-levels}, recall the definition of $f_k:\Nat\to\Nat$
where $k$ is a strictly positive integer:
\begin{xalignat}{2}
\tag{\ref{eq-def-f1}}
f_1(n) &= n \:, \\
\tag{\ref{eq-def-fk}}
f_{k+1}(n) &= \max_{0\leq m\leq n}m f_k(n+1-m)+m+f_k(n-m)\:.
\end{xalignat}
We note that each $f_k$ function is monotonic: this is clear for
$f_1$, and Eq.~\eqref{eq-def-fk} guarantees that $f_{k+1}$ is monotonic if
$f_k$ is.

In this appendix we prove the bound on $f_k(n)$ claimed in
\Cref{sec-PT-levels}:
\begin{align}
\tag{\ref{eq-bound-fkn}}
f_k(n) \leq \Bigl(\frac{n+2k-1}{k}\Bigr)^k -1 < \Bigl(\frac{n}{k}+2\Bigr)^k\:.
\end{align}

To prove Eq.~\eqref{eq-bound-fkn} we introduce the following auxiliary
functions, where $0<k\in\Nat$ and $x,y\in\Reals$:
\begin{align}
\tag{$F$}
\label{eq-def-F}
F_k(x) &\egdef \left(\frac{x+2k-1}{k}\right)^k
\:,
\\
\tag{$G$}
\label{eq-def-G}
G_{k,x}(y) & \egdef (y+1)F_k(x-y+1) = \frac{(y+1)(x-y+2k)^k}{k^k}
\:.
\end{align}
Let us check that $G_{k,x}\bigl(\frac{k+x}{k+1}\bigr)=F_{k+1}(x)$ for
any $k>0$ and $x\geq 0$:
\begin{equation}
\tag{$\ast$}
\begin{aligned}
\label{eq-Gkx}
G_{k,x}\left(\frac{k+x}{k+1}\right)
=& \left( \frac{k+x}{k+1} + 1 \right) \frac{1}{k^k}\left(x -
\frac{k+x}{k+1} + 2k \right)^k 
\\
%=& \frac{1}{k^k}\frac{1}{(k+1)^{k+1}}(x+2k+1)((k+1)(x+2k) - x - k)^k \\
=& \frac{x+2k+1}{k+1}\:\frac{1}{k^k}\left(\frac{kx + 2k^2 +
k}{k+1}\right)^k 
\\
=&
\frac{x+2k+1}{k+1}\:\frac{1}{k^k}\left(\frac{k}{k+1}\right)^k\left(x +
2k + 1\right)^k
\\
%=& \frac{1}{(k+1)^{k+1}}(x+2k+1)(x+2k+1)^k \\
=& \left(\frac{x+2k+1}{k+1}\right)^{k+1} = F_{k+1}(x) \:.
\end{aligned}
\end{equation}

\begin{lemma}
\label{lem-max-G}
$F_{k+1}(x)\geq G_{k,x}(y)$ for all $y\in[0,x]$.
\end{lemma}
\begin{proof}
$G_{k,x}$ is well-defined and differentiable
over $\Reals$, its derivative is
\begin{align*}
G_{k,x}'(y) &= \frac{(x-y+2k)^k - (y+1)k(x-y+2k)^{k-1}}{k^k} \\
&= \frac{(x-y+2k)^{k-1}}{k^k}\bigl( (x-y+2k) - (y+1)k \bigr) \\
&= \frac{(x-y+2k)^{k-1}}{k^k}\bigl(x + k - y (k+1) \bigr)
\:.
\end{align*}
Thus $G'_{k,x}(y)$ is $0$ for $y=y_{\max}
\egdef\frac{k+x}{k+1}$, is strictly positive for
$0\leq y < y_{\max}$, and strictly negative for $y_{\max}<y\leq x$. Hence,
over $[0,x]$, $G_{k,x}$ reaches its maximum at $\frac{k+x}{k+1}$
and \eqref{eq-Gkx} concludes the proof.
\end{proof}

\begin{proposition}
$f_k(n)+1\leq F_k(n)$ for all $k,n\in\Nat$ with $k>0$.
\end{proposition}
\begin{proof}
By induction on $k$. For the base case $k=1$, one has
$f_1(n)+1=n+1=F_1(n)$ by combining Eqs.~\eqref{eq-def-f1} and~\eqref{eq-def-F}.  For the inductive case $k\geq 2$, we know by
Eq.~\eqref{eq-def-fk} that
\begin{align*}
f_{k}(n)+1 &=
m\cdot f_{k-1}(n+1-m)+m+f_{k-1}(n-m)+1
\\
\shortintertext{for some $m\in\{0,\ldots,n\},$}
&\leq (m+1) \bigl[f_{k-1}(n+1-m)+1\bigr]
\\
\shortintertext{by monotonicity of $f_{k-1}$,}
&\leq (m+1) F_{k-1}(n+1-m)
\\
\shortintertext{by induction hypothesis,}
&= G_{k-1,n}(m)\leq F_k(n)
\end{align*}
by Eq.~\eqref{eq-def-G} and
\Cref{lem-max-G}.
\end{proof}
This entails $f_k(n)\leq F_k(n)-1$ which is exactly our original
claim.

% LocalWords:  hyp Eq

%% Local Variables:
%% ispell-check-comments: nil
%% ispell-local-dictionary: "british"
%% fill-column: 70
%% End:

%%% Local Variables:
%%% mode: latex
%%% TeX-master: "lmcs.tex"
%%% End:

\bibliographystyle{alpha}
\bibliography{subwords}

\newcommand{\etalchar}[1]{$^{#1}$}
\begin{thebibliography}{RHF{\etalchar{+}}13}

\bibitem[Ber79]{berstel79}
J.~Berstel.
\newblock {\em Transductions and Context-Free Languages}.
\newblock B.\ G.\ Teubner, Stuttgart, 1979.

\bibitem[BLS15]{bachmeier2015}
G.~Bachmeier, M.~Luttenberger, and M.~Schlund.
\newblock Finite automata for the sub- and superword closure of {CFL}s:
  Descriptional and computational complexity.
\newblock In {\em Proc.\ LATA 2015}, volume 8977 of {\em Lecture Notes in
  Computer Science}, pages 473--485. Springer, 2015.

\bibitem[BSS12]{bojanczyk2012b}
M.~Boja{\'{n}}czyk, L.~Segoufin, and H.~Straubing.
\newblock Piecewise testable tree languages.
\newblock {\em Logical Methods in Comp.\ Science}, 8(3), 2012.

\bibitem[CP18]{carton2018b}
O.~Carton and M.~Pouzet.
\newblock {Simon}'s theorem for scattered words.
\newblock In {\em Proc.\ DLT 2018}, volume 11088 of {\em Lecture Notes in
  Computer Science}, pages 182--193. Springer, 2018.

\bibitem[DGK08]{DGK-ijfcs08}
V.~Diekert, P.~Gastin, and M.~Kufleitner.
\newblock A survey on small fragments of first-order logic over finite words.
\newblock {\em Int.\ J.\ Foundations of Computer Science}, 19(3):513--548,
  2008.

\bibitem[Die02]{diekert2002}
V.~Diekert.
\newblock {Makanin}'s algorithm.
\newblock In M.~Lothaire, editor, {\em Algebraic Combinatorics on Words},
  volume~90 of {\em Encyclopedia of Mathematics and Its Applications},
  chapter~13, pages 387--442. Cambridge Univ.\ Press, 2002.

\bibitem[FK18]{fleischer2018}
L.~Fleischer and M.~Kufleitner.
\newblock Testing {Simon}'s congruence.
\newblock In {\em Proc.\ MFCS 2018}, volume 117 of {\em Leibniz International
  Proceedings in Informatics}, pages 62:1--62:13, 2018.

\bibitem[GS16]{goubault2016}
J.~Goubault{-}Larrecq and S.~Schmitz.
\newblock Deciding piecewise testable separability for regular tree languages.
\newblock In {\em Proc.\ ICALP 2016}, volume~55 of {\em Leibniz International
  Proceedings in Informatics}, pages 97:1--97:15, 2016.

\bibitem[Hai69]{haines69}
L.~H. Haines.
\newblock On free monoids partially ordered by embedding.
\newblock {\em Journal of Combinatorial Theory}, 6(1):94--98, 1969.

\bibitem[Har09]{harrison2009}
J.~Harrison.
\newblock {\em Handbook of Practical Logic and Automated Reasoning}.
\newblock Cambridge University Press, 2009.

\bibitem[Hig52]{higman52}
G.~Higman.
\newblock Ordering by divisibility in abstract algebras.
\newblock {\em Proc.\ London Math.\ Soc.\ (3)}, 2(7):326--336, 1952.

\bibitem[HM15]{hofman2015}
P.~Hofman and W.~Martens.
\newblock Separability by short subsequences and subwords.
\newblock In {\em Proc.\ ICDT 2015}, volume~31 of {\em Leibniz International
  Proceedings in Informatics}, pages 230--246, 2015.

\bibitem[HP00]{henckell2000}
K.~Henckell and J.-E. Pin.
\newblock Ordered monoids and $\mathcal{J}$-trivial monoids.
\newblock In J.-C.~Birget et~al., editor, {\em Algorithmic Problems in Groups
  and Semigroups}, Trends in Mathematics, pages 121--136. Birkh{\"a}user, 2000.

\bibitem[HS19]{HS-ipl2019}
S.~Halfon and {\relax Ph}.~Schnoebelen.
\newblock On shuffle products, acyclic automata and piecewise-testable
  languages.
\newblock {\em Information Processing Letters}, 145:68--73, 2019.

\bibitem[HSZ17]{HSZ-lics2017}
S.~Halfon, {\relax Ph}.~Schnoebelen, and G.~Zetzsche.
\newblock Decidability, complexity, and expressiveness of first-order logic
  over the subword ordering.
\newblock In {\em Proc.\ LICS 2017}, pages 1--12. IEEE, 2017.

\bibitem[Je{\.{z}}16]{jez2016}
A.~Je{\.{z}}.
\newblock Recompression: {A} simple and powerful technique for word equations.
\newblock {\em Journal of the ACM}, 63(1):4:1--4:51, 2016.

\bibitem[JM09]{jezek2009i}
J.~Je{\v{z}}ek and R.~McKenzie.
\newblock Definability in substructure orderings, {I}: finite semilattices.
\newblock {\em Algebra universalis}, 61(1):59, 2009.

\bibitem[KCM08]{kontorovich2008}
L.~Kontorovich, C.~Cortes, and M.~Mohri.
\newblock Kernel methods for learning languages.
\newblock {\em Theoretical Computer Science}, 405(3):223--236, 2008.

\bibitem[KKS15]{KKS-ipl}
P.~Karandikar, M.~Kufleitner, and {\relax Ph}.~Schnoebelen.
\newblock On the index of {Simon}'s congruence for piecewise testability.
\newblock {\em Information Processing Letters}, 115(4):515--519, 2015.

\bibitem[Kl{\'{\i}}11]{klima2011}
O.~Kl{\'{\i}}ma.
\newblock Piecewise testable languages via combinatorics on words.
\newblock {\em Discrete Mathematics}, 311(20):2124--2127, 2011.

\bibitem[KNS16]{KNS-tcs2016}
P.~Karandikar, M.~Niewerth, and {\relax Ph}.~Schnoebelen.
\newblock On the state complexity of closures and interiors of regular
  languages with subwords and superwords.
\newblock {\em Theoretical Computer Science}, 610:91--107, 2016.

\bibitem[KP13]{klima2013}
O.~Kl\'{\i}ma and L.~Pol{\'a}k.
\newblock Alternative automata characterization of piecewise testable
  languages.
\newblock In {\em Proc.\ DLT 2013}, volume 7907 of {\em Lecture Notes in
  Computer Science}, pages 289--300. Springer, 2013.

\bibitem[KS15]{KS-fosubw}
P.~Karandikar and {\relax Ph}.~Schnoebelen.
\newblock Decidability in the logic of subsequences and supersequences.
\newblock In {\em Proc.\ FST\&TCS 2015}, volume~45 of {\em Leibniz
  International Proceedings in Informatics}, pages 84--97, 2015.

\bibitem[KS16]{KS-csl2016}
P.~Karandikar and {\relax Ph}.~Schnoebelen.
\newblock The height of piecewise-testable languages with applications in
  logical complexity.
\newblock In {\em Proc.\ CSL 2016}, volume~62 of {\em Leibniz International
  Proceedings in Informatics}, pages 37:1--37:22, 2016.

\bibitem[KSY10]{kudinov2010}
O.~V. Kudinov, V.~L. Selivanov, and L.~V. Yartseva.
\newblock Definability in the subword order.
\newblock In {\em Proc.\ CiE 2010}, volume 6158 of {\em Lecture Notes in
  Computer Science}, pages 246--255. Springer, 2010.

\bibitem[Kus06]{kuske2006}
D.~Kuske.
\newblock Theories of orders on the set of words.
\newblock {\em RAIRO Theoretical Informatics and Applications}, 40(1):53--74,
  2006.

\bibitem[Mas16]{masopust2016b}
T.~Masopust.
\newblock Piecewise testable languages and nondeterministic automata.
\newblock In {\em Proc.\ MFCS 2016}, volume~58 of {\em Leibniz International
  Proceedings in Informatics}, pages 67:1--67:14, 2016.

\bibitem[Mat98]{matz98}
O.~Matz.
\newblock On piecewise testable, starfree, and recognizable picture languages.
\newblock In {\em Proc.\ FOSSACS '98}, volume 1378 of {\em Lecture Notes in
  Computer Science}, pages 203--210. Springer, 1998.

\bibitem[MT15]{masopust2015}
T.~Masopust and M.~Thomazo.
\newblock On the complexity of $k$-piecewise testability and the depth of
  automata.
\newblock In {\em Proc.\ DLT 2015}, volume 9168 of {\em Lecture Notes in
  Computer Science}, pages 364--376. Springer, 2015.

\bibitem[MT17]{masopust2017}
T.~Masopust and M.~Thomazo.
\newblock On {Boolean} combinations forming piecewise testable languages.
\newblock {\em Theoretical Computer Science}, 682:165--179, 2017.

\bibitem[Pin86]{pin86}
J.-\'E. Pin.
\newblock {\em Varieties of Formal Languages}.
\newblock Plenum, New-York, 1986.

\bibitem[Pla04]{plandowski2004}
W.~Plandowski.
\newblock Satisfiability of word equations with constants is in {PSPACE}.
\newblock {\em Journal of the ACM}, 51(3):483--496, 2004.

\bibitem[PP04]{perrin2004}
D.~Perrin and J.-{\'E}. Pin.
\newblock {\em Infinite words: Automata, Semigroups, Logic and Games}, volume
  141 of {\em Pure and Applied Mathematics Series}.
\newblock Elsevier Science, 2004.

\bibitem[PW97]{pin97}
J.-\'E. Pin and P.~Weil.
\newblock Polynomial closure and unambiguous product.
\newblock {\em Theory of Computing Systems}, 30(4):383--422, 1997.

\bibitem[RHF{\etalchar{+}}13]{rogers2013}
J.~Rogers, J.~Heinz, M.~Fero, J.~Hurst, D.~Lambert, and S.~Wibel.
\newblock Cognitive and sub-regular complexity.
\newblock In {\em Proc.\ {FG} 2012 \& 2013}, volume 8036 of {\em Lecture Notes
  in Computer Science}, pages 90--108. Springer, 2013.

\bibitem[Sak09]{sakarovitch2009}
J.~Sakarovitch.
\newblock {\em {Elements of Automata Theory}}.
\newblock Cambridge University Press, 2009.

\bibitem[Sim72]{simon72}
I.~Simon.
\newblock {\em Hierarchies of Event with Dot-Depth One}.
\newblock PhD thesis, University of Waterloo, Dept.\ Applied Analysis and
  Computer Science, Waterloo, ON, Canada, 1972.

\bibitem[Sim75]{simon75}
I.~Simon.
\newblock Piecewise testable events.
\newblock In {\em Proc.\ 2nd GI Conf.\ on Automata Theory and Formal
  Languages}, volume~33 of {\em Lecture Notes in Computer Science}, pages
  214--222. Springer, 1975.

\bibitem[Sim03]{simon2003}
I.~Simon.
\newblock Words distinguished by their subwords.
\newblock In {\em Proc.\ WORDS 2003}, 2003.

\bibitem[SS83]{sakarovitch83}
J.~Sakarovitch and I.~Simon.
\newblock Subwords.
\newblock In M.~Lothaire, editor, {\em Combinatorics on Words}, volume~17 of
  {\em Encyclopedia of Mathematics and Its Applications}, chapter~6, pages
  105--142. Cambridge Univ.\ Press, 1983.

\bibitem[ST88]{straubing88}
H.~Straubing and D.~Th{\'e}rien.
\newblock Partially ordered finite monoids and a theorem of {I.~Simon}.
\newblock {\em Journal of Algebra}, 119(2):393--399, 1988.

\bibitem[SWY04]{salomaa2004}
A.~Salomaa, D.~Wood, and Sheng Yu.
\newblock On the state complexity of reversals of regular languages.
\newblock {\em Theoretical Computer Science}, 320(2--3):315--329, 2004.

\bibitem[Th{\'e}81]{therien81}
D.~Th{\'e}rien.
\newblock Classification of finite monoids: The language approach.
\newblock {\em Theoretical Computer Science}, 14(2):195--208, 1981.

\bibitem[Tra01]{trahtman2001}
A.~N. Trahtman.
\newblock Piecewise and local threshold testability of {DFA}.
\newblock In {\em Proc.\ FCT 2001}, volume 2138 of {\em Lecture Notes in
  Computer Science}, pages 347--358. Springer, 2001.

\bibitem[Yu05]{yu2005}
Sheng Yu.
\newblock State complexity: Recent results and open problems.
\newblock {\em Fundamenta Informaticae}, 64(1--4):471--480, 2005.

\bibitem[Zet15]{zetzsche2015b}
G.~Zetzsche.
\newblock An approach to computing downward closures.
\newblock In {\em Proc.\ ICALP 2015}, volume 9135 of {\em Lecture Notes in
  Computer Science}, pages 440--451. Springer, 2015.

\bibitem[Zet18]{zetzsche2018}
G.~Zetzsche.
\newblock Separability by piecewise testable languages and downward closures
  beyond subwords.
\newblock In {\em Proc.\ LICS 2018}, pages 929--938. ACM Press, 2018.

\end{thebibliography}

\end{document}